\documentclass[11pt,english]{article}
\usepackage[latin9]{inputenc}
\usepackage[a4paper]{geometry}
\geometry{verbose,tmargin=3cm,bmargin=3cm,lmargin=3cm,rmargin=2.5cm,headsep=1cm,footskip=1cm,columnsep=1cm}

\usepackage{color}
\usepackage{amsmath}
\usepackage{amsthm}
\usepackage{amssymb}
\usepackage{amsmath}
\usepackage{mathcomp}
\usepackage{dsfont}
\usepackage[toc,page]{appendix}
\usepackage{graphicx} 
\usepackage{subfigure}
\usepackage{enumerate}
\usepackage[dvipsnames]{xcolor}
\usepackage{authblk}

\usepackage{hyperref}

\setlength{\textwidth}{17cm} \setlength{\textheight}{22cm}
\setlength{\topmargin}{0cm} 
\setlength{\oddsidemargin}{-0.4cm}
\setlength{\evensidemargin}{-0.4cm}

\newtheorem{definition}{Definition}[section]
\newtheorem{lemma}[definition]{Lemma}

\newtheorem{proposition}[definition]{Proposition}

\newtheorem{theorem}[definition]{Theorem}
\newtheorem{remark}[definition]{Remark}

\numberwithin{equation}{section}

\def\rd{\mathrm{d}}

\def\tr{\mathrm{tr}}

\def\ad{\mathrm{ad}}

\def\N{\mathcal{N}_+}

\def\bR{\mathbb{R}} 
\def\bN{\mathbb{N}}

\def\cN{\mathcal{N}} 
\def\cU{\mathcal{U}} 
\def\cF{\mathcal{F}}
\def\cM{\mathcal{M}}
\def\cG{\mathcal{G}}
\def\cJ{\mathcal{J}}
\def\cL{\mathcal{L}}
\def\cW{\mathcal{W}} 
\def\ph{\varphi}
\def\wt{\widetilde}

\newcommand{\vertiii}[1]{{\left\vert\kern-0.25ex\left\vert\kern-0.25ex\left\vert #1 
    \right\vert\kern-0.25ex\right\vert\kern-0.25ex\right\vert}}

\title{A large deviation principle in many-body quantum dynamics}
\author[1]{Kay Kirkpatrick}
\author[2]{Simone Rademacher}
\author[3]{Benjamin Schlein} 
\affil[1]{Department of Mathematics, University of Illinois at Urbana-Champaign Urbana, IL 61801, USA \\kkirkpat@illinois.edu}
\affil[2]{Institute of Science and Technology Austria, Am Campus 1, 3400 Klosterneuburg, Austria\\ simone.rademacher@ist.ac.at 
}
\affil[3]{Institute of Mathematics, University of Zurich, Winterthurerstrasse 190, 8057 Zurich, Switzerland\\ benjamin.schlein@math.uzh.ch}

\begin{document}

\maketitle

\begin{abstract}
We consider the many-body quantum evolution of a factorized initial data, in the mean-field regime. We show that fluctuations around the limiting Hartree dynamics satisfy large deviation estimates, that are consistent with central limit theorems that have been established in the last years.    
\end{abstract}

%%%%%%%%%%%%%%%%%%%%%%%%%%%%%%%%%%%%%%

\section{Introduction}

A system of $N$ bosons in the mean-field regime can be described by the Hamilton operator 
\[ H_N = \sum_{j=1}^N -\Delta_{x_j} + \frac{1}{N} \sum_{i < j}^N v (x_i - x_j) \]
acting on the Hilbert space $L^2_s (\bR^{3N})$, the subspace of $L^2 (\bR^{3N})$ consisting 
of functions that are symmetric with respect to any permutation of the $N$ particles. 

The time evolution of the $N$ particles is governed by the many-body Schr\"odinger equation 
\begin{equation}\label{eq:schrN}
i\partial_t \psi_{N,t} = H_N \psi_{N,t} \; .
\end{equation}

If the $N$ particles are trapped into a finite region by a confining external potential $v_\text{ext}$, the system exhibits, at zero temperature, complete Bose-Einstein condensation in the minimizer of the Hartree energy functional 
\[ \mathcal{E}_\text{Hartree} (\ph) = \int \left[ |\nabla \ph|^2 + v_\text{ext} |\ph|^2 \right] dx + \frac{1}{2} \int v (x-y) |\ph (x)|^2 |\ph (y)|^2 dx dy \]
taken over $\ph \in L^2 (\bR^3)$ with $\| \ph \| = 1$. For this reason, from the point of view of physics, it is interesting to study the solution of (\ref{eq:schrN}) for an initial sequence $\psi_N \in L^2_s (\bR^{3N})$ exhibiting complete Bose-Einstein condensation, in the sense that the one-particle reduced density $\gamma_N = \tr_{2, \dots , N} |\psi_N \rangle \langle \psi_N |$ associated with $\psi_N$ satisfies $\gamma_N \to |\ph \rangle \langle \ph|$ for a normalized one-particle orbital $\ph \in L^2 (\bR^3)$, in the limit $N \to \infty$. 

To keep our analysis as simple as possible, we consider solutions of (\ref{eq:schrN}) for factorized initial data $\psi_{N,0} = \ph^{\otimes N}$ (which obviously exhibits condensation, since $\gamma_N = |\ph \rangle \langle \ph|$). Notice, however, that our approach could be extended to physically more interesting initial data exhibiting condensation. 

Under quite general assumptions on the interaction potential $v$, one can show that (in contrast with factorization) the property of Bose-Einstein condensation is preserved by the many-body evolution (\ref{eq:schrN}) and that, for every fixed $t \in \bR$, the reduced one-particle density $\gamma_{N,t} = \tr_{2, \dots , N} |\psi_{N,t} \rangle \langle \psi_{N,t}|$ is such that $\gamma_{N,t} \to |\ph_t \rangle \langle \ph_t|$, as $N \to \infty$. Here $\ph_t$ is the solution of the nonlinear Hartree equation 
\begin{equation}\label{eq:hartree} i \partial_t \ph_t = -\Delta \ph_t + (v * |\ph_t|^2) \ph_t \end{equation}
with the initial data $\ph_{t=0} = \ph$. See for example \cite{AGT,AFP,BGM,CLS,C,EY,FKP,FKS,GV,KP,RS,Sp}.

The convergence $\gamma_{N,t} \to |\ph_t \rangle \langle \ph_t|$ of the reduced one-particle density associated with the solution of the Schr\"odinger equation (\ref{eq:schrN}) can be interpreted as a law of large numbers. For a self-adjoint operator $O$ on $L^2 (\bR^3)$, let $O^{(j)} = 1 \otimes \dots \otimes O \otimes \dots \otimes 1$ denote the operator on $L^2 (\bR^{3N})$ acting as $O$ on the $j$-th particle and as the identity on the other $(N-1)$ particles. The probability that, in the state described by the wave function $\psi \in L^2_s (\bR^{3N})$, the observable $O^{(j)}$ takes values in a set $A \subset \bR$ is determined by 
\[ \mathbb{P}_\psi (O^{(j)} \in A) = \big\langle \psi, \chi_A (O^{(j)}) \psi \big\rangle \, . \]
For factorized wave functions $\psi_N = \ph^{\otimes N}$, the operators $O^{(j)}$, $j=1, \dots , N$, define independent and identically distributed random variables with average $\langle \ph , O \ph \rangle$. The standard law of large numbers implies that 
\[ \lim_{N \to \infty} \mathbb{P}_{\ph^{\otimes N}} \left( \left| \frac{1}{N} \sum_{j=1}^N O^{(j)} - \langle \ph , O \ph \rangle \right| > \delta \right) = 0  \]
for all $\delta > 0$. The solution $\psi_{N,t}$ of the Schr\"odinger equation (\ref{eq:schrN}), with factorized initial data $\psi_{N,0} = \ph^{\otimes N}$, is not factorized. Nevertheless, the convergence of the reduced density $\gamma_{N,t} \to |\ph_t \rangle \langle \ph_t|$ implies that the law of large numbers still holds true, i.e. that 
\begin{equation}\label{eq:LLN} \lim_{N \to \infty} \mathbb{P}_{\psi_{N,t}} \left( \left| \frac{1}{N} \sum_{j=1}^N O^{(j)} - \langle \ph_t , O \ph_t \rangle \right| > \delta \right) = 0  \end{equation}
for all $\delta > 0$; see, for example, \cite{BKS}. 

To go beyond (\ref{eq:LLN}) and study fluctuations around the limiting Hartree dynamics, it is useful to factor out the condensate. 

To reach this goal, we define the bosonic Fock space $\cF =  \bigoplus_{j=0}^N L^2_{\perp \ph_t} (\bR^3)^{\otimes_s j}$. On $\cF$, for any $f \in L^2 (\bR^3)$, we introduce the usual creation and annihilation operators $a^*(f), a(f)$, satisfying canonical commutation relations. It will also be convenient to use operator-valued distributions $a_x^*, a_x$, for $x \in \bR^3$, so that 
\[ a^* (f) = \int f(x) \, a_x^* \, dx , \qquad a(f) = \int \bar{f} (x) \, a_x \, dx \]
In terms of $a_x^*, a_x$, we can express the number of particles operator, defined by $(\cN \Psi)^{(n)} = n \Psi^{(n)}$, as 
\[ \cN = \int dx \, a_x^* a_x \]
More generally, for an operator $A$ on the one-particle space $L^2 (\bR^3)$, its second quantization $d\Gamma (A)$, defined on $\cF$ so that $(d\Gamma (A) \Psi)^{(n)} = \sum_{j=1}^n A_j \Psi^{(n)}$, with $A_j = 1 \otimes \dots \otimes A \otimes \dots \otimes 1$ acting non-trivially on the $j$-th particle only, can be written as 
\[ d\Gamma (A) = \int dx dy \; A(x;y) \, a_x^* a_y \]
where $A(x;y)$ is the integral kernel of $A$ (with this notation $\cN = d\Gamma (1)$). More details on the formalism of second quantization applied to the dynamics of mean-field systems can be found in \cite{BPS}.   

In order to factor out the condensate, described at time $t \in \bR$, by the solution $\ph_t$ of (\ref{eq:hartree}), we observe now that every $\psi \in L^2_s (\bR^{3N})$ can be uniquely written as 
\[ \psi = \eta_0 \ph_t^{\otimes N} + \eta_1 \otimes_s \ph^{\otimes (N-1)}_t + \dots + \eta_N \]
with $\eta_j \in L^2_{\perp \ph_t} (\bR^3)^{\otimes_s j}$, where $L^2_{\perp \ph_t} (\bR^3)$ denotes the orthogonal complement in $L^2 (\bR^3)$ of the condensate wave function $\ph_t$. This remark allows us to define, for every $t \in \bR$, a unitary operator \[ \cU_t : L^2_s (\bR^{3N}) \to \cF_{\perp \ph_t}^{\leq N} = \bigoplus_{j=0}^N L^2_{\perp \ph_t} (\bR^3)^{\otimes_s j} \]
by setting $\cU_t \psi = \{ \eta_0, \eta_1 , \dots , \eta_N \}$. The unitary map $\cU_t$, first introduced in \cite{LNSS}, removes the condensate wave function $\ph_t$ and allows us to focus on its orthogonal excitations. It maps the $N$-particle space $L^2_s (\bR^{3N})$ into the truncated Fock space $\cF_{\perp \ph_t}^{\leq N}$, constructed over the orthogonal complement of $\ph_t$. 

The map $\cU_t$ can be used to define the fluctuation dynamics (mapping the orthogonal excitations of the condensate at time $t_1$ into the orthogonal excitations of the condensate at time $t_2$): 
\begin{equation}\label{eq:cWNt} \cW_N (t_2 ; t_1) = \cU_{t_2} e^{-i H_N (t_2-t_1)} \cU_{t_1}^* : \cF_{\perp \ph_{t_1}}^{\leq N} \to \cF_{\perp \ph_{t_2}}^{\leq N} \;.  \end{equation} 
The fluctuation dynamics satisfies the equation 
\[ i\partial_{t_2} \cW_N (t_2 ; t_1) = \cL_N (t_2) \cW_N (t_2 ; t_1) \]
with $\cW_N (t_1;t_1) = 1$ for all $t_1 \in \bR$ and with the generator $\cL_N (t) = \left[ i\partial_{t} \cU_{t} \right] \cU_{t}^* + \cU_{t} H_N \cU_{t}^*$.  To compute the generator $\cL_N (t)$, we use the rules 
\begin{equation}\label{eq:rules}
\begin{split} 
\cU_{t} a^* (\ph_{t}) a (\ph_{t}) \cU_{t}^* &= N - \cN_+ ({t}) , \\
\cU_{t} a^* (f) a (\ph_{t}) \cU_{t}^* &= a^* (f) \, \sqrt{N - \cN_+ ({t})}  , \\
\cU_{t} a^* (\ph_{t}) a (f) \cU_{t}^* &= \sqrt{N - \cN_+ ({t})} \, a(f)  ,\\
\cU_{t} a^* (f) a (g) \cU_{t}^* &= a^* (f) a(g) 
\end{split} \end{equation}
for any $f,g \in L^2_{\perp \ph_{t}} (\bR^3)$. We obtain, similarly to \cite{LNS}, the matrix elements  
\begin{equation}\label{eq:cLN} \begin{split} 
\langle \xi_1, \cL_N (t) \xi_2 \rangle = \; & \langle \xi_1, d\Gamma (h_H (t) + K_{1,t})  \xi_2 \rangle + \text{Re} \int dxdy \;  K_{2,t} (x;y) \, \langle \xi_1, b_x^* b_y^* \xi_2 \rangle \\ 
&- \frac{1}{2N} \langle \xi_1 , d\Gamma (v *|\ph_{t}|^2 + K_{1,t} - \mu_{t}) (\cN_+ (t) - 1) \xi_2 \rangle \\ &+ \frac{2}{\sqrt{N}} \text{Re} \, \langle \xi_1, \cN_+  b ((v*|\ph_{t}|^2) \ph_{t}) \xi_2 \rangle \\ &+\frac{2}{\sqrt{N}} \int dx dy \;  v (x-y)  \text{Re} \,  \ph_{t} (x) \langle \xi_1, a_y^*a_{x'}b_{y'} \xi_2 \rangle   \\ &+ \frac{1}{2N} \int dx dy \,v (x-y) \langle \xi_1 , a_x^* a_y^* a_{x} a_{y} \xi_2 \rangle \, .  \end{split} \end{equation} 
for any $\xi_1, \xi_2 \in \cF_{\perp \ph_{t}}^{\leq N}$. Here $h_H (t) = -\Delta + (v * |\ph_{t}|^2)$, $K_{1,t} (x;y) = v(x-y) \ph_{t} (x) \overline{\ph}_{t} (y)$, $K_{2,t} (x;y) = v(x-y) \ph_{t} (x) \ph_{t} (y)$, $2\mu_{t} = \int dx dy \; v(x-y)  \vert \varphi_{t} (x) \vert^2  \vert \varphi_{t} (y) \vert^2 $. Moreover, we introduced the notation $\cN_+ (t)$ for the number of particles operator on the space $\cF_{\perp \ph_{t}}^{\leq N}$ ($\cN_+ (t) = d\Gamma (q_{t})$, with $q_{t} = 1 - |\ph_{t} \rangle \langle \ph_{t}|$, if we think of $\cF_{\perp \ph_{t}}^{\leq N}$ as a subspace of $\cF$) and, for $f \in L^2_{\perp \ph_{t}} (\bR^3)$, we defined (using the notation introduced in \cite{BS}) 
\begin{equation}\label{eq:bbstar} \begin{split} b^* (f) &= \cU_{t} \, a^* (f) \frac{a(\ph_{t})}{\sqrt{N}} \, \cU_{t}^* = a^* (f) \sqrt{1-\frac{\cN_+ (t)}{N}}  ,\\ 
b(f) &= \cU_{t}  \frac{a^* (\ph_{t})}{\sqrt{N}} a(f) \cU_{t}^* = \sqrt{1-\frac{\cN_+ (t)}{N}} a(f) \end{split} \end{equation}
and the corresponding operator valued distributions $b^*_x, b_x$, for $x \in \bR^3$.

In the limit of large $N$, the fluctuation dynamics $\cW_N (t_2; t_1)$ can be approximated by a limiting dynamics $\cW_\infty (t_2 ;t_1) : \cF_{\perp \ph_{t_1}} =  \bigoplus_{j=0}^\infty L^2_{\perp \ph_{t_1}} (\bR^3)^{\otimes_s j}  \to \cF_{\perp \ph_{t_2}} =  \bigoplus_{j=0}^\infty L^2_{\perp \ph_{t_2}} (\bR^3)^{\otimes_s j}$ satisfying the equation 
\begin{equation}\label{eq:Winfty} i\partial_t \cW_\infty (t_2 ; t_1 ) = \cL_\infty (t_2) \cW_\infty (t_2;t_1) \end{equation} 
with the generator $\cL_\infty (t_2)$, whose matrix elements are given by 
\[ \langle \xi_1 , \cL_\infty (t_2) \xi_2 \rangle  = \langle \xi_1, d\Gamma (h_H (t_2) + K_{1,t_2}) \xi_2 \rangle  + \frac{1}{2} \int \left[ K_{2,t_2} (x;y) \langle \xi_1, a_x^* a_y^* \xi_2 \rangle  + \overline{K}_{2,t_2} (x;y) \langle \xi_1, a_x a_y \xi_2 \rangle  \right] \]
for all $\xi_1, \xi_2 \in \cF_{\perp \ph_{t_2}}$; see \cite{LNS} (this line of research started in \cite{H} and was further explored in \cite{C,GMM,MPP}; recently, an expansion of the many-body dynamics in powers of $N^{-1}$ was obtained in \cite{BPPS}). Notice that $\cL_\infty (t_2)$ acts on (a dense subspace of) the Fock space $\cF_{\perp \ph_{t_2}}$, constructed on the orthogonal complement of $\ph_{t_2}$, with no restriction on the number of particles. We have the inclusions $\cF_{\perp \ph_{t_2}}^{\leq N} \subset \cF_{\perp \ph_{t_2}} \subset \cF = \bigoplus_{j=0}^\infty L^2 (\bR^3)^{\otimes_s j}$.  Observe also that $\cL_\infty (t_2)$ is quadratic in creation and annihilation operators. It follows that the limiting dynamics $\cW_\infty (t_2;t_1)$ acts as a time-dependent family of Bogoliubov transformations (in a slightly different setting, this was shown in \cite{BKS}).
In other words, introducing the notation $A (f;g)= a (f) + a^* (\overline{g})$ for $f \in L^2_{\perp \ph_{t_2}} (\bR^3)$ and $g \in J L^2_{\perp \ph_{t_2}} (\bR^3)$, with $J$ the antilinear operator $Jf = \overline{f}$, we find 
\begin{equation}\label{eq:Winfty2} \cW^*_\infty (t_2;t_1) A (f;g) \cW_\infty (t_2;t_1) = A (\Theta (t_2;t_1) (f;g)) \end{equation} 
for a two-parameter family of operators $\Theta (t_2; t_1) : L_{\perp \ph_{t_1}}^2 (\bR^3) \oplus J L_{\perp \ph_{t_1}}^2 (\bR^3) \to L^2_{\perp \ph_{t_2}} (\bR^3) \oplus J L^2_{\perp \ph_{t_2}} (\bR^3)$. 

The convergence towards the limiting Bogoliubov dynamics (\ref{eq:Winfty}) has been used in \cite{BKS,BSS} to prove that, beyond the law of large numbers (\ref{eq:LLN}), the variables 
$O^{(j)}$ also satisfy the central limit theorem   
\begin{equation}\label{eq:CLT} 
\lim_{N\to\infty}  \mathbb{P}_{\psi_{N,t}} \left( \frac{1}{\sqrt{N}} \sum_{j=1}^N \left( O^{(j)} - \langle \ph_t , O \ph_t \rangle \right) < x \right) = \frac{1}{\sqrt{2\pi}\, \alpha_t} \int_{-\infty}^x e^{- r^2 / (2\alpha_t^2)} dr  \end{equation}
with $\alpha_t = \| f_{0;t} \|_2$. Here $f_{s;t} \in L^2_{\perp \ph_s} (\bR^3 )$ satisfies the equation (for all $0\leq s \leq t$)  
\begin{equation}\label{eq:fst}  i\partial_s f_{s;t} = (h_H (s) + K_{1,s} + J K_{2,s}) f_{s;t}, 
\end{equation} 
with $f_{t;t}=  q_t O \ph_t = O \ph_t   - \langle \varphi_t , O \varphi_t \rangle \varphi_t$, $h_H (s) = -\Delta + (v * |\ph_{s}|^2)$, $K_{1,s} (x;y) = v(x-y) \ph_{s} (x) \overline{\ph}_{s} (y)$ and  $K_{2,s }(x;y) = v(x-y) \ph_{s} (x) \ph_{s} (y)$ (the solution of (\ref{eq:fst}) is related with the family of Bogoliubov transformations $\Theta (t_1 ; t_2)$, since $\Theta (0;t) (f_{t; t} ;J f_{t;t} ) = (f_{0;t} ; J f_{0;t})$).

For singular interaction potentials, scaling as $N^{3\beta} v (N^\beta x)$ for a $0 < \beta < 1$ and converging therefore to a $\delta$-function as $N \to \infty$, the validity of a central limit theorem of the form (\ref{eq:CLT}) was recently established in \cite{R}; in this case, the correlation structure produced by the interaction affects the variance of the limiting Gaussian distribution. For $\beta =1$ (the Gross-Pitaevskii regime), the validity of a central limit theorem for the ground state was established instead in \cite{RaS}. 

%Notice that, for any $s,t \in \bR$, 
%\begin{equation}\label{eq:thetaph} \Theta (t;s) (\ph_s, \overline{\ph}_s) = (\ph_t , \overline{\ph}_t) 
%\end{equation}
%where $\ph_t$ is the solution of the Hartree equation (\ref{eq:hartree}). 

%Using the limiting fluctuation dynamics, described by the Bogolubov transforms $\Theta (t;s)$, we can %state a central limit theorem for the observable $\sum_{j=1}^N O^{(j)}$. Under general assumptions %on the interaction potential, one can show that, in the state $\psi_{N,t}$, the random variable 
%\[ \frac{1}{\sqrt{N}} \sum_{j=1}^N O^{(j)} - \langle \ph_t , O \ph_t \rangle \]
%converges in distribution, as $N \to\infty$, to a centered Gaussian random variable, with variance 
%\[ \sigma_t^2 = \| (U(t;0) + J V (t;0)) O \ph_t \|^2 - |\langle \ph, (U(t;0) + J V (t;0)) O \ph_t \rangle |^2 \]
%The variance $\sigma_t^2$ reflects correlations among particles generated by the many-body %dynamics and described, in first approximation, by the Bogoliubov transformation (\ref{eq:thetats}) (for %the product state $\\ph_t^{\otimes N}$, the variance would simply be $\langle \ph_t, O^2 \ph_t \rangle %- \langle \ph_t, O \ph_t \rangle^2$).
%

In our main theorem, we show, for bounded interactions, a large deviation principle for the fluctuations of the many-body quantum evolution around the limiting Hartree dynamics. 
 \begin{theorem}\label{thm:main} 
 Let $v \in L^1 (\bR^3) \cap L^\infty (\bR^3)$. Let $O$ be a bounded self-adjoint operator 
 on $L^2 (\bR^3)$, with $\| \Delta O (1-\Delta)^{-1} \|_\text{op} < \infty$. Let $\ph \in H^4 (\bR^3)$, with $\| \ph \| = 1$. For $t \in \bR$, let $\psi_{N,t}$ denote the solution of the many-body Schr\"odinger equation (\ref{eq:schrN}), with initial data $\psi_{N,0} = \ph^{\otimes N}$. Then there exists a constant $C > 0$ (depending only on $\| \ph \|_{H^4}$) such that, denoting by $O^{(j)} = 1 \otimes \cdots \otimes O \otimes \cdots \otimes 1$ the operator $O$ acting only on the $j$-th particle,  
\begin{equation}\label{eq:main} \frac{1}{N} \log \mathbb{E}_{\psi_{N,t}} \, e^{\lambda \left[\sum_{j=1}^N (O^{(j)} - \langle \ph_t , O \ph_t \rangle ) \right]}  \leq \frac{\lambda^2}{2}  \alpha_t^2  + C \lambda^3 \vertiii{O}^3 \exp (C (1+ \| v \|_1 + \| v \|_\infty) \vert t\vert)  \end{equation} 
for all $\lambda \leq  \vertiii{O}^{-1} e^{-C ( \| v \|_\infty + \| v \|_1) t }$. Here we defined 
\begin{equation}\label{eq:Otriple}
\vertiii{O}  = \| \Delta O (1-\Delta)^{-1} \|_\text{op} +  (1+ \| v \|_\infty + \| v \|_1) \| O \|_\text{op}  
\end{equation} 
and $\alpha_t^2 = \| f_{0;t} \|_2^2$, with $f$ as defined in (\ref{eq:fst}). 
\end{theorem}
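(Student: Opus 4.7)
The plan is to transport the generating function to the bosonic Fock space via the unitary $\cU_t$, evaluate it at $s=0$ in closed form, and then run a Gr\"onwall argument in $s \in [0,t]$ using an interpolating state designed so that the leading (Bogoliubov) contributions of the generator cancel. First, using $\cU_0 \psi_{N,0} = \Omega$ (because $\psi_{N,0} = \ph^{\otimes N}$) and $\cU_t \psi_{N,t} = \cW_N(t;0)\Omega$, I rewrite
\[
\mathbb{E}_{\psi_{N,t}}\, e^{\lambda X_N(t)} \,=\, \| e^{\lambda Y_N(t)/2}\, \cW_N(t;0)\, \Omega \|^2 , \qquad Y_N(t) := \cU_t\, X_N(t)\, \cU_t^* .
\]
Applying the rules (\ref{eq:rules}), the observable decomposes as $Y_N(t) = \sqrt{N}(b(f_t) + b^*(f_t)) + M_N(t)$, where $f_t := q_t O \ph_t$ and $M_N(t) := d\Gamma(q_t O q_t) - \langle \ph_t, O \ph_t\rangle\, \cN_+(t)$ is an $\cN_+$-bounded error of size $O(\|O\|_\mathrm{op})$.

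For $s \in [0,t]$ I introduce the interpolating state
\[
\Phi_N(s) := e^{\lambda \sqrt{N}\, \phi(f_{s;t})/2}\, \cW_N(s;0)\, \Omega , \qquad \phi(g) := a(g) + a^*(g) ,
\]
with $f_{s;t}$ the solution of (\ref{eq:fst}) and terminal data $f_{t;t} = f_t$. At $s = 0$, $\cW_N(0;0) = 1$ and a direct BCH/Weyl computation (using $a(\cdot)\Omega = 0$) gives $\|\Phi_N(0)\|^2 = e^{\lambda^2 N \|f_{0;t}\|_2^2/2} = e^{\lambda^2 N \alpha_t^2/2}$. At $s = t$, the state $\Phi_N(t)$ differs from the target $e^{\lambda Y_N(t)/2}\cW_N(t;0)\Omega$ only by (i) replacing $a,a^*$ with $b,b^*$ in the exponent (corrections of order $\cN_+/N$) and (ii) the omission of $M_N(t)$; both turn into subleading contributions to $\frac{1}{N}\log(\cdot)$ once the moment bounds discussed in the last paragraph are available.

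The Gr\"onwall step is the heart. Differentiating $\|\Phi_N(s)\|^2$ in $s$, I get a contribution from $-i\cL_N(s)\cW_N(s;0)\Omega$ and a Duhamel-type contribution from $\partial_s e^{\lambda\sqrt{N}\phi(f_{s;t})/2}$; crucially, the self-commutator $[\phi(f_{s;t}), \phi(\dot f_{s;t})] = 2i\,\Im\langle f_{s;t}, \dot f_{s;t}\rangle$ is a purely imaginary scalar, so the Duhamel derivative simplifies to $(\lambda\sqrt{N}\phi(\dot f_{s;t})/2 + c_s)\, e^{\lambda\sqrt{N}\phi(f_{s;t})/2}$ with $c_s \in i\bR$, which drops out of the real part in $\frac{d}{ds}\|\Phi_N\|^2 = 2\Re\langle\Phi_N,\partial_s\Phi_N\rangle$. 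The construction of $f_{s;t}$ via (\ref{eq:fst}) is designed so that, after commuting the Bogoliubov part $\cL_\infty(s)$ of $\cL_N(s)$ through the exponential, the resulting quadratic-in-$a,a^*$ contribution combines with $\lambda\sqrt{N}\phi(\dot f_{s;t})/2$ to leave only a scalar, yielding exact cancellation of all leading terms. What remains is the cubic and quartic parts of $\cL_N(s)$---with explicit $N^{-1/2}$ and $N^{-1}$ prefactors from (\ref{eq:cLN})---plus the $\cN_+/N$ corrections separating $b$ from $a$ inside $\cL_N$. Using propagation estimates for $\ph_s$ and $f_{s;t}$ in $L^2 \cap H^2$ (both growing at most as $\exp(C(\|v\|_\infty + \|v\|_1) s)$ under (\ref{eq:hartree}) and (\ref{eq:fst}), thanks to $\ph \in H^4$), I aim to close
\[
\frac{d}{ds}\|\Phi_N(s)\|^2 \,\leq\, C N \lambda^3\, \vertiii{O}^3\, e^{C(\|v\|_\infty + \|v\|_1) s}\, \|\Phi_N(s)\|^2 ,
\]
which yields $\|\Phi_N(t)\|^2 \leq \exp\!\left(\lambda^2 N\alpha_t^2/2 + C N \lambda^3\, \vertiii{O}^3\, e^{Ct}\right)$. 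A final perturbative step relating $\Phi_N(t)$ to the true target then gives (\ref{eq:main}).

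The hardest step will be bounding the cubic and quartic contributions on $\Phi_N(s)$: because $\Phi_N(s)$ has a coherent-state-like structure creating $\sim \lambda^2 N$ expected excitations, naive $\cN_+$-moment bounds overshoot by a factor $\lambda^2 N$. My remedy will be to factor the coherent component out via a Weyl transformation, writing $\Phi_N(s) = C_{s,\lambda}\, W(\lambda\sqrt{N}f_{s;t}/2)\, \widetilde{\Phi}_N(s)$, where $\widetilde{\Phi}_N(s)$ has $\cN_+$-moments uniformly bounded in $\lambda, N$ (at $s = 0$ it is $\Omega$ itself, and the moment bounds propagate via the standard Bogoliubov/fluctuation-dynamics estimates of \cite{LNS, BPS}). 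Conjugating a cubic operator $N^{-1/2} a^* a^* a$ by this Weyl operator shifts each $a, a^*$ by $\lambda\sqrt{N}f_{s;t}/2$; the fully-shifted term is exactly $N^{-1/2}(\lambda\sqrt{N})^3 \times (\text{scalar in }v, \ph_s, f_{s;t}) = O(\lambda^3 N\, \vertiii{O}^3)$, while partially-shifted terms carry fewer powers of $\sqrt{N}$ but act on $\widetilde{\Phi}_N(s)$ with its uniform moment bounds. The analogous quartic $N^{-1} a^* a^* a a$ yields $O(\lambda^4 N\, \vertiii{O}^4)$, which is absorbed into the cubic bound under the admissible range $\lambda\vertiii{O} \leq e^{-C(\|v\|_\infty + \|v\|_1) t}$.
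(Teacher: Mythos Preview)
Your architecture matches the paper's: pass to the fluctuation dynamics via $\cU_t$, build an interpolating state carrying $e^{\lambda\sqrt{N}\phi(f_{s;t})/2}$ with $f_{s;t}$ solving (\ref{eq:fst}) so that the Bogoliubov part of $\cL_N(s)$ cancels against $\partial_s f_{s;t}$, and close by Gr\"onwall. The gap is in the treatment of the subleading errors. After the Bogoliubov cancellation one is left not only with the cubic and quartic pieces of $\cL_N$, but also with anti-self-adjoint remainders coming from the \emph{quadratic} terms themselves: conjugating $\int K_{2,s} b_x^* b_y^* + \mathrm{h.c.}$ and the cubic term through the exponential produces contributions bounded only in the form sense by $C\vertiii{O}\lambda\,\cN_+(s)$ (the $R_2,\dots,R_5$ of Lemma~\ref{lm:step2}). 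On $\Phi_N(s)$ one expects $\langle\cN_+\rangle \sim \lambda^2 N$, so these would indeed be $O(\lambda^3 N)$, \emph{provided} you can prove $\langle\Phi_N,\cN_+\Phi_N\rangle \leq C\lambda^2 N \|\Phi_N\|^2$ uniformly in $s$. Your Weyl remedy does not deliver this: with $g_s=\lambda\sqrt{N}f_{s;t}/2$ one computes $W(-g_s)\,e^{\lambda\sqrt{N}\phi(f_{s;t})/2}=e^{\lambda^2 N\|f_{s;t}\|^2/4}\,e^{\lambda\sqrt{N}\,a(f_{s;t})}$, so $\widetilde\Phi_N(s)\propto e^{\lambda\sqrt{N}\,a(f_{s;t})}\cW_N(s;0)\Omega$. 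The estimates you cite from \cite{LNS,BPS} control $\cN_+$-moments of $\cW_N(s;0)\Omega$, not of this unbounded annihilation-exponential applied to it; propagating moments of $\widetilde\Phi_N$ is a Gr\"onwall problem of the same strength as the one you are trying to close. (A secondary issue: with $\phi=a+a^*$ your $\Phi_N(s)$ leaves $\cF_{\perp\ph_s}^{\leq N}$, whereas $\cL_N(s)$ in (\ref{eq:cLN}) is only given as a form on that truncated space.)

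The paper breaks this circularity by building the needed moment control into the interpolant itself: it works with $\xi_t(s) = e^{\lambda\kappa_s\cN_+(s)/2}\, e^{\lambda\sqrt{N}\phi_+(f_{s;t})/2}\cW_N(s;0)\Omega$ (using $\phi_+=b+b^*$, which keeps the state in $\cF_{\perp\ph_s}^{\leq N}$), and chooses $\kappa_s$ to satisfy $\dot\kappa_s = C\big(\vertiii{O} + (\|v\|_1+\|v\|_\infty)\kappa_s\big)$. The extra $(\lambda/2)\dot\kappa_s\cN_+$ in the generator then absorbs all the $\lambda\cN_+$ remainders in one stroke, and the Gr\"onwall closes using only the operator-norm errors $\|T_j\|_{\mathrm{op}}\leq CN\lambda^3$. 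The endpoint at $s=t$ is matched by first replacing $d\Gamma(q_t\widetilde O_t q_t)$ with $c\lambda\|O\|\cN_+(t)$ via a separate short Gr\"onwall (Lemma~\ref{lm:step1}), and the endpoint at $s=0$ is computed exactly on $\Omega$ (Lemma~\ref{lm:step3}). Adding the $e^{\lambda\kappa_s\cN_+/2}$ factor to your interpolating state is precisely the missing ingredient.
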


{\it Remark:} The result and its proof can be trivially extended to particles moving in $d$ dimensions, for any $d \in \bN \backslash \{ 0 \}$.

It follows from (\ref{eq:main}) that 
\[ \begin{split} \mathbb{P}_{\psi_{N,t}} \Big( N^{-1} \sum_{j=1}^N (O^{(j)} - \langle \ph_t , O \ph_t \rangle ) > x \Big) &=  \mathbb{P}_{\psi_{N,t}} \left( e^{-\lambda N x} \; e^{\lambda \left[ \sum_{j=1}^N \left( O^{(j)} - \langle \ph_t , O \ph_t \rangle \right) \right]}  > 1 \right) \\ 
&\leq e^{-\lambda N x} \, \mathbb{E}_{\psi_{N,t}} \, e^{\lambda \left[\sum_{j=1}^N (O^{(j)} - \langle \ph_t , O \ph_t \rangle ) \right]} \end{split} \] 
for all $0 \leq \lambda \leq \vertiii{O}^{-1} e^{-C ( \| v \|_\infty + \| v \|_1) t}$. Thus, 
\begin{equation}\label{eq:cor}
 \mathbb{P}_{\psi_{N,t}} \Big( N^{-1} \sum_{j=1}^N (O^{(j)} - \langle \ph_t , O \ph_t \rangle ) > x \Big) 
\leq e^{N \gamma (x)}  \end{equation} 
with rate function 
\[ \gamma (x) = \inf_\lambda  \; \left[ - \lambda x + \frac{\lambda^2}{2} \alpha_t^2 + C \lambda^3 \vertiii{O}^3 \exp (C (1+ \| v \|_1 + \| v \|_\infty) t) \right] \]
where the infimum is taken over all $0 \leq \lambda \leq  \vertiii{O}^{-1} \exp (-C ( \| v \|_\infty + \| v \|_1) t)$. For any fixed $t > 0$, the infimum is attained at 
\[ \lambda_x =  \frac{2x}{\alpha_t^2 + \sqrt{\alpha_t^4 + 12 C x \vertiii{O}^3 \exp (C (1+ \| v \|_1 + \| v \|_\infty) t)}} \]
if $x > 0$ is small enough (so that $\lambda_x \leq \vertiii{O}^{-1} \exp (-C ( \| v \|_\infty + \| v \|_1) t)$). This leads (again for $x > 0$ so small that $\lambda_x \leq \vertiii{O}^{-1} \exp (-C ( \| v \|_\infty + \| v \|_1) t)$) to  
\[ \begin{split}  \gamma (x) = \; &- \frac{2 x^2 \sqrt{\alpha_t^4 + 12 C x \vertiii{O}^3 \exp (C (1+ \| v \|_1 + \| v \|_\infty) t)}}{\left[ \alpha_t^2 + \sqrt{\alpha_t^4+ 12 C x \vertiii{O}^3 \exp (C (1+ \| v \|_1 + \| v \|_\infty) t)} \right]^2} \\ &+  
\frac{8C x^3 \vertiii{O}^3 \exp (C (1+ \| v \|_1 + \| v \|_\infty) t)}{\left[ \alpha_t^2 + \sqrt{\alpha_t^4+ 12 C x \vertiii{O}^3 \exp (C (1+ \| v \|_1 + \| v \|_\infty) t)} \right]^3}  \; .\end{split}  \]

Notice that, in the regime $x = y / \sqrt{N}$,  $N \gamma (x) \simeq -x^2/(2\alpha_t^2)$, which is consistent with the central limit theorem (\ref{eq:CLT}), obtained in \cite{BKS,BSS}. This shows, in particular, that the quadratic term on the r.h.s. of (\ref{eq:main}) is optimal.

To prove Theorem \ref{thm:main}, we first write the expectation on the l.h.s. of \eqref{eq:main} as 
\begin{equation} \label{eq:idea1} \begin{split}  \mathbb{E}_{\psi_{N,t}} \, e^{\lambda \left[\sum_{j=1}^N (O^{(j)} - \langle \ph_t , O \ph_t \rangle )\right] } &= \left\langle \psi_{N,t} , e^{\lambda \left[\sum_{j=1}^N (O^{(j)} - \langle \ph_t , O \ph_t \rangle ) \right] }  \psi_{N,t} \right\rangle \\  &=  \left\langle \Omega, \cW^*_N (t;0) e^{\lambda d\Gamma (q_t \wt{O}_t q_t) + \lambda \sqrt{N} \phi_+ (q_t O \ph_t )} \cW_N (t;0) \Omega \right\rangle \; . \end{split} \end{equation} 
in terms of the fluctuation dynamics introduced in (\ref{eq:cWNt}). Here we used the choice of the initial data to write \[ \psi_{N,t} = e^{-i H_N t} \ph^{\otimes N} = e^{-i H_N t} \cU_{0}^* \Omega = \cU_{t}^* \cW_N (t;0) \Omega \,  .\]  Then we applied (\ref{eq:rules}) to conjugate $\exp (\lambda [ \sum_{j=1}^N (O^{(j)} - \langle \ph_t, O \ph_t \rangle ] )$ with $\cU_t$. We introduced the notation $\wt{O}_t = O - \langle \varphi_t, O \varphi_t \rangle$. 

In the next step, motivated by the bound $\pm d\Gamma (q_t \wt{O}_t q_t) \leq c \, \| O \| \cN_+ (t)$, we control the r.h.s. of (\ref{eq:idea1}), by the product
\[ \left\langle \Omega, \cW^*_N (t;0) e^{\lambda \sqrt{N} \phi_+ (q_t O \ph_t)/2} e^{c \lambda \| O \| \mathcal{N}_+ (t)} e^{\lambda \sqrt{N} \phi_+ (q_t O \ph_t)/2} \cW_N (t;0) \Omega \right\rangle , \]
up to the exponential of a cubic expression in $\lambda$, contributing only to the last term on the r.h.s. of (\ref{eq:main}); this is the content of Lemma \ref{eq:step1-1}. In the next step, Lemma \ref{lm:step2}, we replace the fluctuation dynamics $\cW_N (t;0)$ by its limit $ \cW_\infty (t;0)$, as defined in (\ref{eq:Winfty}); as in the first step, also this replacement only produces an error cubic in $\lambda$ in (\ref{eq:main}). Describing the action of $\cW_\infty$ through the solution of (\ref{eq:fst}), we arrive at the product
\begin{equation}\label{eq:idea3} \left\langle \Omega, e^{\lambda \sqrt{N} \phi_+ (f_{0;t})/2}  e^{\lambda \kappa_t \cN_+ (0)} e^{\lambda \sqrt{N} \phi_+ (f_{0;t})/2}  \Omega \right\rangle  \end{equation} 

In the final step, Lemma \ref{lm:step3}, we estimate (\ref{eq:idea3}), concluding the proof of (\ref{eq:main}). This step makes use of the choice of product initial data (which implies that the expectation is taken in the vacuum); at the expenses of a longer proof, we could have proven Theorem \ref{thm:main} to a larger and physically more interesting class of initial data.

\section{Preliminaries} 
 
To begin with, we introduce some notation and we recall some basic facts. For a given normalized $\ph \in L^2 (\bR^3)$, we consider the Hilbert space $\cF_{\perp \ph}^{\leq N} = \bigoplus_{j=0}^N L^2_{\perp \ph} (\bR^3)^{\otimes_s j}$, with the number of particles operator $\cN_+ = d\Gamma (1-|\ph \rangle \langle \ph|)$. On $\cF_{\perp \ph}^{\leq N}$, we define the operators $b(f), b^* (f)$ as in (\ref{eq:bbstar}). We also define
\[  \phi_+ (f) = b (f) + b^* (f) , \quad \phi_- (f) = -i (b(f) - b^* (f)) \; . \]
For $g_1, g_2, g,h \in L^2_{\perp \ph} (\bR^3)$, we find the commutation relations 
\begin{equation}\label{eq:comm1} [ b(g), b(h) ]= [ b^* (g), b^* (h)] = 0, \quad  [b(g), b^* (h) ] = \langle g,h \rangle \left( 1 - \frac{\N}{N} \right) - \frac{1}{N} a^* (h) a (g)  , \end{equation}
\begin{equation}\label{eq:comm4} [ \phi_+ (h) , i \phi_- (g) ] = - 2 \text{Re } \langle h, g \rangle \left( 1 - \frac{\N}{N} \right) + \frac{1}{N} a^* (g) a (h) + \frac{1}{N} a^* (h) a (g) , \end{equation}
\begin{equation}\label{eq:comm2} [ b(h) , a^* (g_1) a (g_2) ] = \langle h, g_1 \rangle b (g_2) , \qquad [ b^* (h) , a^* (g_1) a (g_2) ] = - \langle g_2 , h \rangle b^* (g_1) , \end{equation}
\begin{equation}\label{eq:comm3} [ \phi_+ (h) , \cN_+ ] = i \phi_- (h), \quad [ i\phi_- (h), \cN_+ ] = \phi_+ (h)  .\end{equation}
More generally, 
\begin{equation}\label{eq:comm3b} 
[ \phi_+ (h) , \rd \Gamma (H) ] = i \phi_- (Hh), \quad [ i\phi_- (h), \rd \Gamma (H) ] = \phi_+ (Hh) \end{equation}
for any self-adjoint operators $H$. 

We also recall the bounds
\begin{equation}\label{eq:bd-bb*} \| b(h) \xi \| \leq \| h \|_2 \| \cN_+^{1/2} \xi \| , \qquad \| b^* (h) \xi \| \leq \| h \|_2 \| (\cN_+ + 1)^{1/2} \xi \|, \end{equation}
valid for any $h \in L^2_{\perp \ph} (\bR^3)$ and the estimate 
\begin{equation}\label{eq:bd-dG} \pm d\Gamma (H) \leq \| H \|_\text{op} \, \cN_+ \end{equation} 
for every bounded operator $H$ on $L^2_{\perp \ph} (\bR^3)$. For more details, we refer to \cite[Section 2]{BS}. 

Furthermore, we introduce the notation $\ad_{B}^{(n)} \left( A \right) $ defined for two operators $A,B$ recursively by 
\[ 
 \ad^{(0)}_B \left( A \right) = A, \quad \ad_{B}^{(n)} \left( A \right) = \left[ A,   \;  \ad_{B}^{(n-1)} \left( A \right)\right]. 
\]
%The techniques of iterated commutators was introduced in \cite{GMM} and also used for example in 
%\cite{BS}. 

\begin{lemma}
\label{lm:comm_b}
Let $h,g \in L^2_{\perp \varphi} \left( \mathbb{R}^3 \right)$.  Then 
\begin{equation}\label{eq:lm1-1} \begin{split} 
 \ad_{\sqrt{N}\phi_+(h)}^{(2n+1)}\left( b(g) \right) =& - 2^{2n} \sqrt{N} \| h \|^{2n}_2 \langle g, h \rangle \left( 1- \frac{\N}{N} \right)  \\
&+(2^{2n} -1) \frac{1}{ \sqrt{N}} \| h \|_2^{2n-2} \langle g , h \rangle a^* (h) a(h) + \frac{1}{\sqrt{N}} \| h \|_2^{2n} a^* (h) a (g) \end{split} \end{equation}
for all $n \geq 0$ and 
\begin{equation}\label{eq:lm1-2} \ad_{\sqrt{N}\phi_+(h)}^{(2n)}\left( b(g) \right) = \left( 2^{2n-1}-1\right) \| h \|^{2n-2}_2 \langle g, h \rangle \; i \phi_-(h) + \| h \|^{2n}_2 b(g) - \| h \|^{2n-2}_2 \langle g, h \rangle \; b^* (h) \end{equation}
for all $n \geq 1$. 
\end{lemma}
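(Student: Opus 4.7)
The plan is to prove both identities simultaneously by induction on the commutator order $m$, where the two formulas cover the odd cases $m=2n+1$ (for $n\ge 0$) and the even cases $m=2n$ (for $n\ge 1$). The base case is $m=1$, which is a direct computation: writing $\sqrt{N}\phi_+(h) = \sqrt{N}(b(h)+b^*(h))$ and using $[b(g),b(h)]=0$ together with the nontrivial commutator in (\ref{eq:comm1}) immediately gives
\[ [\sqrt{N}\phi_+(h),b(g)] = -\sqrt{N}\langle g,h\rangle\bigl(1-\tfrac{\cN_+}{N}\bigr) + \tfrac{1}{\sqrt{N}} a^*(h)a(g), \]
which is exactly (\ref{eq:lm1-1}) at $n=0$.

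The inductive step then reduces to computing the action of $[\sqrt{N}\phi_+(h),\,\cdot\,]$ on the finite list of operators that occur on the right-hand sides of (\ref{eq:lm1-1}) and (\ref{eq:lm1-2}): namely $(1-\cN_+/N)$, $a^*(h)a(h)$, $a^*(h)a(g)$, $i\phi_-(h)$, $b(g)$, and $b^*(h)$. Each of these is a one-line calculation from the commutation relations already stated in the paper: (\ref{eq:comm3}) gives $[\sqrt{N}\phi_+(h),1-\cN_+/N] = -\tfrac{1}{\sqrt{N}} i\phi_-(h)$; (\ref{eq:comm2}) yields $[\sqrt{N}\phi_+(h),a^*(h)a(h)] = \sqrt{N}\|h\|^2(b(h)-b^*(h)) = \sqrt{N}\|h\|^2\, i\phi_-(h)$ and $[\sqrt{N}\phi_+(h),a^*(h)a(g)] = \sqrt{N}(\|h\|^2 b(g) - \langle g,h\rangle b^*(h))$; (\ref{eq:comm4}) gives $[\sqrt{N}\phi_+(h),i\phi_-(h)] = -2\sqrt{N}\|h\|^2(1-\cN_+/N) + \tfrac{2}{\sqrt{N}} a^*(h)a(h)$; and (\ref{eq:comm1}) provides the needed commutators for $b(g)$ (done above) and $b^*(h)$, specifically $[\sqrt{N}\phi_+(h),b^*(h)] = \sqrt{N}\|h\|^2(1-\cN_+/N) - \tfrac{1}{\sqrt{N}} a^*(h)a(h)$.

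The heart of the proof is then just bookkeeping of coefficients. For the odd-to-even step, applying $[\sqrt{N}\phi_+(h),\,\cdot\,]$ to (\ref{eq:lm1-1}) at index $n$ collects contributions that multiply $i\phi_-(h)$, $b(g)$, and $b^*(h)$; crucially, the first two of the three terms in (\ref{eq:lm1-1}) both contribute to $i\phi_-(h)$ (through $[\sqrt{N}\phi_+(h),1-\cN_+/N]$ and $[\sqrt{N}\phi_+(h),a^*(h)a(h)]$), and the coefficient identity $2^{2n} + (2^{2n}-1) = 2^{2n+1}-1$ produces the coefficient of $i\phi_-(h)$ in (\ref{eq:lm1-2}) at index $n+1$. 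For the even-to-odd step, applying $[\sqrt{N}\phi_+(h),\,\cdot\,]$ to (\ref{eq:lm1-2}) at index $n$ produces contributions to $(1-\cN_+/N)$, $a^*(h)a(h)$, and $a^*(h)a(g)$; the three terms of (\ref{eq:lm1-2}) all feed into the first two, and the identity $2(2^{2n-1}-1) + 1 + 1 = 2^{2n}$ (respectively $2(2^{2n-1}-1)+1 = 2^{2n}-1$) yields the claimed coefficients in (\ref{eq:lm1-1}) at index $n$.

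There is no conceptual obstacle: the structure of the identities is dictated by the fact that the diagonal operators $(1-\cN_+/N)$ and $a^*(h)a(h)$ commute with $\sqrt{N}\phi_+(h)$ only into the off-diagonal $i\phi_-(h)$ and vice versa, so that the action of $\ad_{\sqrt{N}\phi_+(h)}$ on the span of the six operators listed above is block-triangular and closes. The only risk is a sign or factor-of-two slip in the bookkeeping, which is easily verified by cross-checking the base case $n=1$ of (\ref{eq:lm1-2}) against the direct second commutator.
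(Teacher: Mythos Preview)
Your proposal is correct and follows essentially the same approach as the paper's proof: induction on the commutator order with the base case $m=1$ computed directly from (\ref{eq:comm1}), then alternating odd-to-even and even-to-odd steps using the commutation relations (\ref{eq:comm1})--(\ref{eq:comm3}) and the same coefficient identities $2^{2n}+(2^{2n}-1)=2^{2n+1}-1$ and $2(2^{2n-1}-1)+1=2^{2n}-1$. The only cosmetic difference is that you precompute the action of $[\sqrt{N}\phi_+(h),\,\cdot\,]$ on the six building-block operators up front, whereas the paper computes these commutators inline during each inductive step.
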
 
 
\begin{proof}
We prove the Lemma by induction. From (\ref{eq:comm1}), we find 
\begin{align*}
\ad_{\sqrt{N} \phi_+(h)} \left(b(g) \right) = [ \sqrt{N} \phi_+ (h) , b(g) ] = \sqrt{N} [b^* (h), b(g)] = - \sqrt{N} \langle g,h \rangle \left( 1- \frac{\N}{N} \right) + \frac{1}{\sqrt{N}} a^*(h) a(g) ,
\end{align*}
in agreement with (\ref{eq:lm1-1}) (for $n=0$). Now, we assume that, for a given $n \in \mathbb{N}$, (\ref{eq:lm1-1}) holds true, and we prove (\ref{eq:lm1-2}), with $n$ replaced by $(n+1)$. To this end, we compute (using (\ref{eq:lm1-1})) 
\[ \begin{split} 
\ad^{(2n+2)}_{\sqrt{N} \phi_+(h)} \left(b(g) \right) = \; & [ \sqrt{N} \phi_+ (h), \ad^{(2n+1)}_{\sqrt{N} \phi_+(h)} \left( b(g) \right) ] \\ = \; &  2^{2n}  \| h \|_2^{2n} 
\langle g, h \rangle [ \phi_+ (h) ,  \N  ] + (2^{2n} - 1)  \| h \|_2^{2n-2} \langle g ,h \rangle [ \phi_+ (h), a^* (h) a(h) ] \\ &+ \| h \|_2^{2n}  [ \phi_+ (h) , a^* (h) a (g) ] \; .\end{split}   \]
With (\ref{eq:comm2}) and (\ref{eq:comm3}), we obtain (using the identity $2^{2n} + (2^{2n} - 1) = 2^{2n+1} -1$) 
\[ \ad^{(2n+2)}_{\sqrt{N} \phi_+(h)} \left(b(g) \right) = (2^{2n+1} -1) \| h \|^{2n}_2 \langle g, h \rangle i \phi_- (h) + \| h \|_2^{2n+2} b(g) - \| h \|_2^{2n} \langle g, h \rangle b^* (h)  \]
as claimed in (\ref{eq:lm1-2}) (with $n$ replaced by $n+1$). Finally, we assume (\ref{eq:lm1-2}) for a given $n \in \mathbb{N}$, and we show that (\ref{eq:lm1-1}) holds true, with the same $n \in \mathbb{N}$. In fact, using (\ref{eq:lm1-2}), we get  
\[ \begin{split} 
\ad^{(2n+1)}_{\sqrt{N} \phi_+(h)} \left(b(g) \right) = \; &[ \sqrt{N} \phi_+ (h), \ad^{(2n)}_{\sqrt{N} \phi_+(h)} \left( b(g) \right) ] \\ = \; &(2^{2n-1} - 1) \| h \|_2^{2n-2} \langle g ,h \rangle \sqrt{N} [ \phi_+ (h) , i \phi_- (h) ] + \| h \|_2^{2n} \sqrt{N} [ \phi_+ (h) , b(g) ] \\ &- \| h \|^{2n-2}_2 \langle g,h \rangle \sqrt{N} [ \phi_+ (h) , b^* (h) ] \; . \end{split} \]
With (\ref{eq:comm1}), (\ref{eq:comm4}), we find (using the identities $-2(2^{2n-1} - 1) - 2 = - 2^{2n}$ and $2 (2^{2n-1} - 1) + 1 = 2^{2n} - 1$), 
\[  \begin{split}  \ad^{(2n+1)}_{\sqrt{N} \phi_+(h)} \left(b(g) \right) =\; & - 2^{2n} \sqrt{N} \| h \|_2^{2n} \langle g ,h \rangle \left( 1 - \frac{\N}{N} \right) + (2^{2n} -1)  \frac{1}{\sqrt{N}}  \| h \|_2^{2n-2} \langle g ,h \rangle a^* (h) a(h) \\ &+  \frac{1}{\sqrt{N}}  \| h \|_2^{2n} a^* (h) a (g) \end{split} \]
confirming (\ref{eq:lm1-1}).
\end{proof}

\begin{proposition}\label{prop:b-con}
Let $g, h \in L^2_{\perp \varphi} (\mathbb{R}^3)$. With the shorthand notation $\gamma_s = \cosh s$ and $\sigma_s = \sinh s$, we have 
\begin{equation}
\label{eq:prop1-1} 
\begin{split} 
e^{\sqrt{N} \phi_+ (h)} b(g) e^{-\sqrt{N} \phi_+(h)} =&\; \gamma_{\| h \|} b(g) + \gamma_{\| h \|} \frac{\gamma_{\| h \|} - 1}{\| h \|^2} \langle g , h \rangle i \phi_- (h) - \frac{\gamma_{\| h \|} - 1}{\| h \|^2} \langle g , h \rangle b^* (h) \\ &- \sqrt{N} \, \gamma_{\|h \|} \frac{\sigma_{\| h \|}}{\| h \|}  \langle g , h \rangle \left( 1 - \frac{\N}{N} \right) + \frac{1}{\sqrt{N}} \, \frac{\sigma_{\| h \|}}{\| h \|} \frac{\gamma_{\| h \|} - 1}{ \| h \|^2}  \langle g , h \rangle a^* (h) a (h) \\ &+ \frac{1}{\sqrt{N}} \, \frac{\sigma_{\| h \|}}{\| h \|} a^* (h) a (g)  \; .
\end{split} 
\end{equation}
\end{proposition}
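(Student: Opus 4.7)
The plan is to apply the Hadamard (Baker--Campbell--Hausdorff) expansion
\[ e^{\sqrt{N}\phi_+(h)} b(g) e^{-\sqrt{N}\phi_+(h)} = \sum_{n \geq 0} \frac{1}{n!} \, \ad^{(n)}_{\sqrt{N}\phi_+(h)}(b(g)), \]
and then to insert the closed expressions from Lemma \ref{lm:comm_b} into the series. Because both $b(g)$ and $\phi_+(h)$ are bounded by suitable powers of $\cN_+ + 1$ thanks to \eqref{eq:bd-bb*}, the series converges in the strong sense on vectors in $\FN$ with, say, bounded particle number, which is enough since all eventual identities can then be read off at the level of matrix elements on such vectors; I would not dwell on this technical point beyond a one-sentence justification.

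The main computation is to organise the sum by parity of $n$. The even part $n=2k$ contributes, by \eqref{eq:lm1-2}, three kinds of terms: a $b(g)$-term with coefficient $\sum_{k\geq 0} \|h\|^{2k}/(2k)! = \gamma_{\|h\|}$; a $b^*(h)\langle g,h\rangle$-term with coefficient $-\sum_{k\geq 1} \|h\|^{2k-2}/(2k)! = -(\gamma_{\|h\|}-1)/\|h\|^2$; and an $i\phi_-(h)\langle g,h\rangle$-term whose coefficient is
\[ \sum_{k\geq 1} \frac{2^{2k-1}-1}{(2k)!}\|h\|^{2k-2} = \frac{\gamma_{2\|h\|}-1}{2\|h\|^2}-\frac{\gamma_{\|h\|}-1}{\|h\|^2} = \frac{\gamma_{\|h\|}(\gamma_{\|h\|}-1)}{\|h\|^2}, \]
where the duplication identity $\gamma_{2x}=2\gamma_x^2-1$ is used. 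The odd part $n=2k+1$ contributes, by \eqref{eq:lm1-1}, three further terms. Using $\sum_{k\geq 0}(2\|h\|)^{2k}/(2k+1)! = \sigma_{2\|h\|}/(2\|h\|) = \gamma_{\|h\|}\sigma_{\|h\|}/\|h\|$, the $(1-\cN/N)$-term collects the coefficient $-\sqrt{N}\gamma_{\|h\|}\sigma_{\|h\|}/\|h\|$; the $a^*(h)a(g)$-term collects $\sigma_{\|h\|}/\|h\|$; and the $a^*(h)a(h)\langle g,h\rangle$-term collects
\[ \frac{1}{\sqrt{N}}\sum_{k\geq 0}\frac{2^{2k}-1}{(2k+1)!}\|h\|^{2k-2} = \frac{1}{\sqrt{N}\|h\|^3}\bigl(\sigma_{\|h\|}\gamma_{\|h\|}-\sigma_{\|h\|}\bigr) = \frac{1}{\sqrt{N}}\,\frac{\sigma_{\|h\|}}{\|h\|}\,\frac{\gamma_{\|h\|}-1}{\|h\|^2}, \]
where again $\sigma_{2x}=2\sigma_x\gamma_x$ is used. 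Assembling these six pieces yields exactly \eqref{eq:prop1-1}.

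There is no real conceptual obstacle, because Lemma \ref{lm:comm_b} has already done the heavy lifting: it identifies the stable algebraic structure under repeated commutation with $\sqrt{N}\phi_+(h)$, closing on the six elementary operators that appear in the statement. The only place where care is required is the arithmetic of the generating functions: the factors $2^{2k}-1$ and $2^{2k-1}-1$ in Lemma \ref{lm:comm_b} split each series into two sub-series of argument $2\|h\|$ and $\|h\|$, and one must then invoke the hyperbolic duplication formulas to re-express the result in the factorised $\gamma_{\|h\|}, \sigma_{\|h\|}$ form written in \eqref{eq:prop1-1}. I would present the proof as a single display separating even and odd contributions, followed by the two short computations above identifying the six coefficients.
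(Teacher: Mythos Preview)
Your proposal is correct and follows precisely the paper's approach: the paper's proof is simply the Hadamard expansion \eqref{eq:BHC} combined with Lemma~\ref{lm:comm_b}, with convergence justified by noting that $\sqrt{N}\phi_+(h)$ and $b(g)$ are bounded on the truncated Fock space $\cF_{\perp\varphi}^{\leq N}$. You go further than the paper by actually carrying out the parity splitting and the six coefficient computations (which check out, including the use of the duplication identities), whereas the paper leaves these to the reader.
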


\begin{remark}
A formula similar to (\ref{eq:prop1-1}) for $e^{\sqrt{N} \phi_+ (h)} b^* (g) e^{-\sqrt{N} \phi_+(h)}$ can be obtained by hermitian conjugation of (\ref{eq:prop1-1}) (and replacing $h$ by $-h$). 
\end{remark}

\begin{proof}
The expressions (\ref{eq:prop1-1}) follows from the commutator expansion
\begin{equation}\label{eq:BHC} e^X \, Y \, e^{-X} = \sum_{j=0}^\infty \frac{1}{j!} \, \ad_X^{(j)} (Y) \end{equation}
combined with the formulas in Lemma \ref{lm:comm_b}. Since the operators $X = \sqrt{N} \phi_+ (h)$ and $Y = b (g)$ are bounded on the truncated Fock space $\mathcal{F}_+^{\leq N}$, it is easy to show the validity of the expansion (\ref{eq:BHC}) for (\ref{eq:prop1-1}) (the difference between $e^X Y e^{-X}$ and $\sum_{j=0}^n    \ad_X^{(j)} (Y) / j!$ converges to  zero in norm, as $n \to \infty$, for every fixed $N \in \mathbb{N}$). 
%Although $H$ is not necessarily bounded, it follows from (\ref{eq:lm2-1}) (with $n = 0$) that the commutator $[\sqrt{N} \phi_+ (h) , \rd \Gamma (H)]$ is bounded, if %$h \in D(H)$. This observation allows us to prove the validity of (\ref{eq:BHC}) also for (\ref{eq:prop1-2}), by first expanding the difference 
%\[ e^{\sqrt{N} \phi_+(h)} \rd \Gamma (H) e^{-\sqrt{N} \phi_+(h)}  - \rd \Gamma (H) = \int_0^1 e^{s \sqrt{N} \phi_+(h)} [ \sqrt{N} \phi_+ (h) , \rd \Gamma (H) ] e^{-s 
%\sqrt{N} \phi_+(h)} ds \]
%and at the end integrating over $s$. 
\end{proof} 

In particular, it follows from (\ref{eq:prop1-1}) that, for $x \in \bR^3$, 
\begin{equation} 
\label{eq:bx-exp} 
\begin{split} 
e^{\sqrt{N} \phi_+ (h)} b_x e^{-\sqrt{N} \phi_+(h)} =&\; \gamma_{\| h \|} b_x + \gamma_{\| h \|} \frac{\gamma_{\| h \|} - 1}{\| h \|^2} h(x) i \phi_- (h) - 
\frac{\gamma_{\| h \|} - 1}{\| h \|^2} h (x) b^* (h) \\ &- \sqrt{N} \, \gamma_{\|h \|} \frac{\sigma_{\| h \|}}{\| h \|}  h (x)  \left( 1 - \frac{\N}{N} \right) + \frac{1}{\sqrt{N}} \, \frac{\sigma_{\| h \|}}{\| h \|} \frac{\gamma_{\| h \|} - 1}{ \| h \|^2}  h (x)  a^* (h) a (h) \\ &+ \frac{1}{\sqrt{N}} \, \frac{\sigma_{\| h \|}}{\| h \|} a^* (h) a_x  \;.
\end{split} 
\end{equation}

We will also need a formula for $e^{\sqrt{N} \phi_+ (h)} a^*_x a_y e^{-\sqrt{N} \phi_+(h)}$. To derive such an expression, we compute 
\[ \begin{split} \frac{d}{ds} e^{s \sqrt{N} \phi_+ (h)} a_x^* a_y e^{-s \sqrt{N} \phi_+ (h)} &= \sqrt{N} e^{s \sqrt{N} \phi_+ (h)} [ \phi_+ (h) , a_x^* a_y ] e^{-s \sqrt{N} \phi_+ (h)}  \\ &= \sqrt{N} \, \overline{h(x)} e^{s \sqrt{N} \phi_+ (h)} b_y e^{-s \sqrt{N} \phi_+ (h)} - \sqrt{N} h (y) e^{s \sqrt{N} \phi_+ (h)} b^*_x e^{-s \sqrt{N} \phi_+ (h)} \; .
\end{split} \]
Using (\ref{eq:bx-exp}) (and its hermitian conjugate) and then integrating over $s \in [0;1]$, we arrive at 
\begin{equation}\label{eq:axay} \begin{split}  e^{s \sqrt{N} \phi_+ (h)} &a_x^* a_y e^{-s \sqrt{N} \phi_+ (h)} \\ = \; &a_x^* a_y + \sqrt{N} \, \frac{\sigma_{\| h \|}}{\| h \|} \left( \overline{h(x)} \, b_y - h(y) \, b_x^* \right) \\ &- N \frac{\sigma_{\| h \|}^2}{\| h \|^2} \, \overline{h(x)} h(y)  \, \left( 1- \frac{\cN_+}{N} \right) + \frac{(\gamma_{\| h \|}-1)}{\| h \|^2}  \left( \overline{h(x)} a^* (h) a_y + h (y) a_x^* a (h) \right) \\ &+ \sqrt{N} \frac{\sigma_{\| h \|}}{\| h \|} \frac{(\gamma_{\| h \|}-1)}{\| h \|^2} \, \overline{h(x)} h(y) \, i \phi_- (h) + \left(\frac{\gamma_{\| h \|}-1}{\| h \|^2} \right)^2  \overline{h(x)} h(y) \, a^* (h) a (h)\; . \end{split} \end{equation}

Integrating (\ref{eq:axay}) against the integral kernel of a self-adjoint operator, we can also get a formula for $e^{\sqrt{N} \phi_+ (h)} d\Gamma (H) e^{-\sqrt{N} \phi_+(h)}$, for a self-adjoint operator $H$. 
\begin{proposition}\label{prop:dG-con}
Let $H : D(H) \to L^2_{\perp \varphi} (\bR^3)$ be self-adjoint, with $D(H) \subset L^2_{\perp \varphi} (\bR^3)$ denoting the domain of $H$. Let $h \in D (H)$. Then  
\begin{equation}\label{eq:prop1-2}
\begin{split} 
e^{\sqrt{N} \phi_+(h)} \rd \Gamma (H) e^{-\sqrt{N} \phi_+(h)} =& \; \rd \Gamma (H) + \sqrt{N} \, \frac{\sigma_{\| h \|}}{\| h \|} i \phi_- (Hh) \\ &- N \frac{\sigma_{\| h \|}^2}{\|h \|^2} \langle h, H h \rangle \left(1 - \frac{\N}{N} \right) + \frac{(\gamma_{\| h \|} -1)}{\| h \|^2} (a^* (h) a (Hh) + a^* (Hh) a (h) )
\\ &+ \sqrt{N} \,  \frac{\sigma_{\| h \|}}{\| h \|} \frac{\gamma_{\| h \|} - 1}{\| h \|^2} \langle h, H h \rangle i \phi_- (h)  + \left( \frac{\gamma_{\| h \|} - 1}{\| h \|^2} \right)^2  \langle h, H h \rangle a^* (h) a (h)  \; . \end{split}\end{equation}
\end{proposition}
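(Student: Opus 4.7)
The plan is to deduce (\ref{eq:prop1-2}) by integrating the already-established pointwise expansion (\ref{eq:axay}) against the integral kernel $H(x;y)$ of the self-adjoint operator $H$. Writing $d\Gamma(H) = \int dx\, dy\, H(x;y)\, a_x^* a_y$ and commuting the conjugation by $e^{\sqrt{N}\phi_+(h)}$ through the integral produces six contributions, one from each term on the right-hand side of (\ref{eq:axay}), which must be identified one by one.

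To identify each contribution, I would exploit that the self-adjointness of $H$ gives $H(y;x) = \overline{H(x;y)}$, so that the partial contractions of the kernel against $h$ reduce to
\[ \int H(x;y)\, \overline{h(x)}\, dx = \overline{(Hh)(y)}, \qquad \int H(x;y)\, h(y)\, dy = (Hh)(x). \]
The first term of (\ref{eq:axay}) then reproduces $d\Gamma(H)$. The two terms linear in $b_y$ and $b_x^*$ combine, via the identity $i\phi_-(Hh) = b(Hh) - b^*(Hh)$, into $\sqrt{N}(\sigma_{\|h\|}/\|h\|)\, i\phi_-(Hh)$. The three scalar factors $\overline{h(x)}h(y)$ that appear in (\ref{eq:axay}) each integrate against $H(x;y)$ to $\langle h, Hh\rangle$, producing the correct coefficients of $(1-\cN_+/N)$, of $i\phi_-(h)$, and of $a^*(h)a(h)$. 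Finally, the mixed pieces $\overline{h(x)}\, a^*(h) a_y$ and $h(y)\, a_x^* a(h)$ integrate to $a^*(h) a(Hh)$ and $a^*(Hh) a(h)$, respectively, with the common prefactor $(\gamma_{\|h\|}-1)/\|h\|^2$. Summing the six contributions reproduces exactly the right-hand side of (\ref{eq:prop1-2}).

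The hypothesis $h \in D(H)$ is used precisely so that $Hh \in L^2_{\perp \ph}(\bR^3)$ and the operators $a^\#(Hh)$, $b^\#(Hh)$ are well-defined; the identities above can then be read as an equality of quadratic forms on the dense subspace of $\cF_{\perp \ph}^{\leq N}$ consisting of vectors of bounded particle number, on which the commutation with the integral over $x,y$ is trivially justified. The step contains no substantial obstacle: the analytic content has already been absorbed into (\ref{eq:axay}), and the present proposition is essentially its smeared version against a general self-adjoint kernel. An alternative, equivalent route would be to differentiate $s \mapsto e^{s\sqrt{N}\phi_+(h)} d\Gamma(H) e^{-s\sqrt{N}\phi_+(h)}$, use the commutator identity (\ref{eq:comm3b}) to reduce to conjugating $i\phi_-(Hh)$, and then invoke Proposition \ref{prop:b-con} together with its hermitian conjugate; but the direct integration of (\ref{eq:axay}) is the shortest path and I would follow it.
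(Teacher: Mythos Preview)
Your proposal is correct and follows precisely the route indicated in the paper: the paragraph immediately preceding the proposition states that (\ref{eq:prop1-2}) is obtained by integrating (\ref{eq:axay}) against the integral kernel of a self-adjoint operator, which is exactly what you carry out term by term. The alternative you mention (differentiating in $s$ and using (\ref{eq:comm3b}) together with Proposition~\ref{prop:b-con}) is in fact how the paper first derives (\ref{eq:axay}) itself, so both routes are present in the paper and your choice of the direct integration matches the paper's derivation of this particular proposition.
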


\begin{proposition} 
\label{prop:eN}
Let $h \in L^2_{\perp \ph} (\bR^3)$ and denote by $\cN_+$ the number of particles operator on $\cF_{\perp \ph}^{\leq N}$. Then, for every $s \in \bR$, 
\begin{equation}\label{eq:prop2} \begin{split}  
e^{-s \cN_+} b (h) e^{s \cN_+} &= e^s  b(h)  , \\
e^{-s \cN_+} b^* (h) e^{s \cN_+} &= e^{-s} b^* (h) , \\
e^{-s \cN_+} \phi_+ (h) e^{s \cN_+} &= \cosh (s) \phi_+ (h) + \sinh (s) i \phi_- (h) , \\
e^{-s \cN_+} i \phi_- (h) e^{s \cN_+} &= \cosh (s) i \phi_- (h) + \sinh (s) \phi_+ (h) \; . \end{split} \end{equation} 
\end{proposition}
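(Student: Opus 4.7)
The four identities are all consequences of the fact that $b^*(h)$ and $b(h)$ are, respectively, raising and lowering operators for $\cN_+$. My plan is to first establish the conjugation formulas for $b(h)$ and $b^*(h)$ and then deduce the ones for $\phi_\pm(h)$ by linearity.

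First, from the commutation relations $[\phi_+(h),\cN_+]=i\phi_-(h)$ and $[i\phi_-(h),\cN_+]=\phi_+(h)$ recorded in \eqref{eq:comm3}, taking sums and differences (and using $\phi_+(h)=b(h)+b^*(h)$, $i\phi_-(h)=b(h)-b^*(h)$) I obtain
\[
[b(h),\cN_+]=b(h),\qquad [b^*(h),\cN_+]=-b^*(h).
\]
(One can alternatively derive these directly from $b^*(h)=a^*(h)\sqrt{1-\cN_+/N}$ together with $[\cN_+,a^*(h)]=a^*(h)$, since $\sqrt{1-\cN_+/N}$ is a function of $\cN_+$; this just reflects that $b^*(h)$ raises the particle number by one on $\cF_{\perp \ph}^{\leq N}$.)

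Next, set $F(s)=e^{-s\cN_+} b(h) e^{s\cN_+}$. Differentiating in $s$ and using the first commutator above gives $F'(s)=e^{-s\cN_+}[b(h),\cN_+]e^{s\cN_+}=F(s)$, with $F(0)=b(h)$, so $F(s)=e^s b(h)$. The analogous computation for $G(s)=e^{-s\cN_+} b^*(h) e^{s\cN_+}$ yields $G'(s)=-G(s)$, hence $G(s)=e^{-s}b^*(h)$. These are the first two lines of \eqref{eq:prop2}.

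Finally, for the $\phi_\pm$ identities I simply take the appropriate linear combinations:
\[
e^{-s\cN_+}\phi_+(h)e^{s\cN_+}=e^s b(h)+e^{-s}b^*(h)=\cosh(s)(b(h)+b^*(h))+\sinh(s)(b(h)-b^*(h)),
\]
which equals $\cosh(s)\phi_+(h)+\sinh(s)\, i\phi_-(h)$, and similarly
\[
e^{-s\cN_+}\, i\phi_-(h)\, e^{s\cN_+}=e^s b(h)-e^{-s}b^*(h)=\cosh(s)\, i\phi_-(h)+\sinh(s)\phi_+(h).
\]
There is no real obstacle here; the computation is a one-line ODE argument once the commutators with $\cN_+$ are in hand. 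The only mild point worth noting is that everything takes place on the truncated Fock space $\cF_{\perp \ph}^{\leq N}$, where $\cN_+$ is bounded and $b(h), b^*(h)$ are bounded operators, so all manipulations with $e^{\pm s \cN_+}$ are justified without domain issues.
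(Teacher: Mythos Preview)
Your proof is correct and follows essentially the same approach as the paper: establish the commutators $[b(h),\cN_+]=b(h)$, $[b^*(h),\cN_+]=-b^*(h)$, deduce the first two identities, and then take linear combinations to obtain the formulas for $\phi_+$ and $i\phi_-$. The only difference is cosmetic---you spell out the ODE argument and the derivation of the commutators from \eqref{eq:comm3}, whereas the paper simply states the commutators and says ``we easily find''.
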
 
\begin{proof}
From $[b (h) , \cN_+] = b(h)$ and $[b^* (h), \cN_+] = - b^* (h)$, we easily find that
\[ \begin{split} e^{-s \cN_+} b (h) e^{s \cN_+} &= e^s  b(h), \\ 
e^{-s \cN_+} b^* (h) e^{s \cN_+} &= e^{-s} b^* (h) \; .\end{split} \]
Thus 
\[  \begin{split} e^{-s \cN_+} \phi_+ (h) e^{s \cN_+} &= e^s  b(h) + e^{-s} b^* (h) ,\\ 
e^{-s \cN_+} i\phi_- (h) e^{s \cN_+} &= e^s  b(h) - e^{-s} b^* (h) \; . \end{split} \]
Writing $b(h) = (\phi_+ (h) + i\phi_- (h) )/2$ and $b^* (h) = (\phi_+ (h) - i\phi_- (h))/2$, we arrive at (\ref{eq:prop2}). 
\end{proof}

\begin{proposition}\label{prop:partialt} 
Let $t \mapsto \ph_t$ with $\| \ph_t \|_2 = 1$, independently of $t$. Let $t \mapsto h_t$ be a differentiable map, with values in $L^2_{\perp \ph_t} (\bR^3)$. For $\xi_1, \xi_2 \in \cF_{\perp \ph_t}^{\leq N}$ we find 
\begin{equation}\label{eq:partialt} 
\begin{split} 
\Big\langle \xi_1 , \Big[ \partial_t &e^{\sqrt{N} \phi_+ (h_t)} \Big] e^{-\sqrt{N} \phi_+ (h_t)} \xi_2 \Big\rangle \\ = \; &\sqrt{N} \, \frac{\sigma_{\| h_t \|}}{\| h_t \|}  \langle \xi_1 , \phi_+ ( \partial_t h_t) \xi_2 \rangle - 
\sqrt{N} \frac{\sigma_{\| h_t \|}}{\| h_t \|} \frac{\gamma_{\| h_t \|} -1}{\| h_t \|^2} \text{Im} \langle \partial_t h_t , h_t \rangle  \langle \xi_1, \phi_- (h_t) \xi_2 \rangle 
\\ &- \sqrt{N} \, \frac{\sigma_{\| h_t \|} - \| h_t \|}{\| h_t \|^3} \text{Re} \langle \partial_t h_t , h_t \rangle \langle \xi_1, \phi_+ (h_t) \xi_2 \rangle - i N \frac{\sigma_{\| h_t \|}^2}{\| h_t \|^2} \text{Im} \langle \partial_t h_t , h_t \rangle  \langle \xi_1 , (1 - \cN_+ / N) \xi_2 \rangle \\ &+i  \left(\frac{\gamma_{\| h_t \|} - 1}{\| h_t \|^2} \right)^2 \text{Im} \langle \partial_t h_t , h_t \rangle \langle \xi_1 , a^* (h_t ) a(h_t) \xi_2 \rangle \\ &+ 
 \frac{\gamma_{\| h_t \|} - 1}{\| h_t \|^2} \Big\langle \xi_1, \left[ a^* (h_t) a (\partial_t h_t) - a^* (\partial_t h_t) a (h_t) \right] \xi_2 \Big\rangle \; . \end{split}\end{equation}
\end{proposition}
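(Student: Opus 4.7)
The natural starting point is Duhamel's formula for the derivative of a one-parameter family of operator exponentials. Since $\sqrt{N}\phi_+(h_t)$ is bounded on the truncated Fock space $\cF_{\perp\ph_t}^{\leq N}$ for each fixed $N$, both $e^{\sqrt{N}\phi_+(h_t)}$ and its $t$-derivative are well defined, and
\begin{equation*}
\bigl[\partial_t e^{\sqrt{N}\phi_+(h_t)}\bigr]\, e^{-\sqrt{N}\phi_+(h_t)} = \sqrt{N}\int_0^1 e^{s\sqrt{N}\phi_+(h_t)}\,\phi_+(\partial_t h_t)\, e^{-s\sqrt{N}\phi_+(h_t)}\,ds .
\end{equation*}
Thus the task reduces to conjugating $\phi_+(\partial_t h_t) = b(\partial_t h_t) + b^*(\partial_t h_t)$ by $e^{s\sqrt{N}\phi_+(h_t)}$ and integrating in $s \in [0,1]$.

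The conjugation of $b(\partial_t h_t)$ is obtained by applying Proposition~\ref{prop:b-con} with $h$ rescaled to $sh_t$ and $g$ set to $\partial_t h_t$; the rescaling introduces factors of $s$ through $\langle g, sh_t\rangle$, $a^*(sh_t)$, $b^*(sh_t)$, etc., which cancel the inverse powers of $s$ coming from $\sigma_{s\|h_t\|}/(s\|h_t\|)$ and $(\gamma_{s\|h_t\|}-1)/(s\|h_t\|)^2$, leaving an $s$-dependence only inside $\gamma_{s\|h_t\|}$ and $\sigma_{s\|h_t\|}$. The conjugation of $b^*(\partial_t h_t)$ follows by Hermitian conjugation as noted in the Remark after Proposition~\ref{prop:b-con}. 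Summing the two and writing $\alpha = \langle\partial_t h_t, h_t\rangle$, $\bar\alpha = \langle h_t, \partial_t h_t\rangle$, one reorganizes the cross-terms using $\alpha - \bar\alpha = 2i\,\mathrm{Im}\,\alpha$ and the identity
\begin{equation*}
\alpha\, b^*(h_t) + \bar\alpha\, b(h_t) = \mathrm{Re}\,\alpha\;\phi_+(h_t) + \mathrm{Im}\,\alpha\;\phi_-(h_t),
\end{equation*}
which produces exactly the six operator structures appearing in \eqref{eq:partialt}: the diagonal $\phi_+(\partial_t h_t)$, the two pieces proportional to $\phi_\pm(h_t)$ (with $\mathrm{Im}$ and $\mathrm{Re}$ coefficients respectively), the piece proportional to $(1 - \cN_+/N)$, the piece proportional to $a^*(h_t)a(h_t)$, and finally $a^*(h_t)a(\partial_t h_t) - a^*(\partial_t h_t)a(h_t)$.

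To conclude, I would integrate each of the six scalar coefficients over $s \in [0,1]$ using the elementary identities $\int_0^1 \cosh(s\|h_t\|)\,ds = \sigma_{\|h_t\|}/\|h_t\|$, $\int_0^1 \sinh(s\|h_t\|)\,ds = (\gamma_{\|h_t\|}-1)/\|h_t\|$, $\int_0^1 \sinh(s\|h_t\|)\cosh(s\|h_t\|)\,ds = \sigma_{\|h_t\|}^2/(2\|h_t\|)$, and $\int_0^1 \sinh(s\|h_t\|)(\cosh(s\|h_t\|)-1)\,ds = (\gamma_{\|h_t\|}-1)^2/(2\|h_t\|)$; the last identity (which relies on $\sigma^2 = (\gamma-1)(\gamma+1)$) is what produces the squared factor $\bigl((\gamma_{\|h_t\|}-1)/\|h_t\|^2\bigr)^2$ in the $a^*(h_t)a(h_t)$ coefficient of \eqref{eq:partialt}. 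Multiplying by the overall $\sqrt{N}$ from the Duhamel integral and taking the matrix element against $\xi_1, \xi_2$ yields the stated formula. The only genuine difficulty is algebraic bookkeeping: tracking the compensating powers of $s$ from the rescaling, distinguishing $\alpha$ from $\bar\alpha$ across the $b$ versus $b^*$ conjugations, and gathering coefficients into real and imaginary parts of $\langle\partial_t h_t, h_t\rangle$; no new operator estimate is required beyond Proposition~\ref{prop:b-con}.
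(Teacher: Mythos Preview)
Your approach is correct and essentially identical to the paper's: both use the Duhamel formula $[\partial_t e^{\sqrt{N}\phi_+(h_t)}]e^{-\sqrt{N}\phi_+(h_t)}=\sqrt{N}\int_0^1 e^{\tau\sqrt{N}\phi_+(h_t)}\phi_+(\partial_t h_t)e^{-\tau\sqrt{N}\phi_+(h_t)}\,d\tau$, apply Proposition~\ref{prop:b-con} with $h$ replaced by $\tau h_t$, and integrate over $\tau\in[0,1]$. Your write-up is in fact slightly more explicit than the paper's, which simply displays the integrand and says ``Integrating over $\tau$, we arrive at \eqref{eq:partialt}''; the integration identities and the $\mathrm{Re}/\mathrm{Im}$ regrouping you spell out are exactly the bookkeeping hidden in that sentence.
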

\begin{proof}
For any two bounded operators $A,B$ we can write
\[ e^A - e^B = \left[ e^A e^{-B} - 1 \right] e^B = \left[ \int_0^1 d\tau \, \frac{d}{d\tau} e^{\tau A} e^{-\tau B} \right] e^B 
= \int_0^1 d\tau \, e^{\tau A} (A-B) e^{(1-\tau) B} \; . \]
Hence, if $t \to A_t$ is an operator-valued functions, differentiable in $t$, we find
\[ e^{A_{t+h}} - e^{A_t} = \int_0^1 d\tau \, e^{\tau A_{t+h}} (A_{t+h} - A_t) e^{(1-\tau )A_t}  \]
Dividing by $h$ and letting $h \to 0$, we find 
\[ \partial_t e^{A_t} = \int_0^1 d\tau \, e^{\tau A_t} \partial_t A_t e^{(1-\tau ) A_t}  \; . \]
In particular,
\[  \left[ \partial_t e^{\sqrt{N} \phi_+ (h_t)} \right] e^{-\sqrt{N} \phi_+ (h_t)} = \sqrt{N} \int_0^1 d\tau \, e^{\tau \sqrt{N} 
\phi_+ (h_t)} \phi_+ ( \partial_t h_t)  e^{-\tau \sqrt{N} \phi_+ (h_t)} \; . \] 
With Prop. \ref{prop:b-con}, we find
\[ \begin{split} \Big[ \partial_t &e^{\sqrt{N} \phi_+ (h_t)} \Big] e^{-\sqrt{N} \phi_+ (h_t)} \\ = \; &\sqrt{N} 
\int_0^1 d\tau \left[ \gamma_{\tau \| h_t \|} \phi_+ ( \partial_t h_t) -2 \gamma_{\tau \| h_t \|} \frac{\gamma_{\tau \| h_t \|} - 1}{\| h_t \|^2} \text{Im} \langle \partial_t h_t , h_t \rangle \phi_- (h_t) \right. \\  &- \frac{\gamma_{\tau \| h_t \|} -1}{\| h_t \|^2} \text{Re} \langle \partial_t h_t , h_t \rangle \phi_+ (h_t) - \frac{\gamma_{\tau \| h_t \|} -1}{\| h_t \|^2} \text{Im} \langle \partial_t h_t , h_t \rangle \phi_- (h_t) \\  
&-2i \sqrt{N} \gamma_{\tau \| h_t \|} \frac{\sigma_{\tau \|h_t \|}}{\| h_t \|} \text{Im}  \langle \partial_t h_t , h_t \rangle (1 - \cN_+ / N)+ \frac{2i}{\sqrt{N}} \frac{\sigma_{\tau \| h_t \|}}{\| h_t \|} \frac{\gamma_{\tau \| h_t \|} - 1}{\| h_t \|^2} \text{Im } \langle \partial_t h_t , h_t \rangle a^* (h_t ) a(h_t)  \\ & \left.  + 
\frac{1}{\sqrt{N}} \frac{\sigma_{\tau \| h_t \|}}{\| h_t \|} \left[ a^* (h_t) a (\partial_t h_t) - a^* (\partial_t h_t) a (h_t) \right] \right] \; . \end{split} \]
Integrating over $\tau$, we arrive at (\ref{eq:partialt}). 
\end{proof} 

\section{Proof of main theorem}

To prove Theorem \ref{thm:main}, we start from (\ref{eq:idea1}), writing 
\[ \begin{split}  \mathbb{E}_{\psi_{N,t}} \, e^{\lambda \left[\sum_{j=1}^N (O^{(j)} - \langle \ph_t , O \ph_t \rangle )\right] }  &=  \left\langle \Omega, \cW^*_N (t;0) e^{\lambda d\Gamma (q_t \wt{O}_t q_t) + \lambda \sqrt{N} \phi_+ (q_t O \ph_t )} \cW_N (t;0) \Omega \right\rangle \; . \end{split} \]

\begin{lemma} 
\label{lm:step1}
There exist constants $C, c > 0$ such that
\begin{equation}\label{eq:step1-1} 
\begin{split} &\left\langle \Omega, \cW^*_N (t;0) e^{\lambda d\Gamma (q_t \wt{O}_t q_t) + \lambda \sqrt{N} \phi_+ (q_t O \ph_t )} \cW_N (t;0) \Omega \right\rangle \\ &\hspace{2cm} \leq e^{C N \| O \|^3 \lambda^3} 
 \left\langle \Omega, \cW^*_N (t;0) e^{\lambda \sqrt{N} \phi_+ (q_t O \ph_t)/2} e^{c \lambda \| O \| \mathcal{N}_+ (t)} e^{\lambda \sqrt{N} \phi_+ (q_t O \ph_t)/2} \cW_N (t;0) \Omega \right\rangle \end{split} 
 \end{equation}
for all $\lambda \leq \| O \|^{-1}$.
 \end{lemma}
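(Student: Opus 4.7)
The plan is to establish the inequality by a symmetric Trotter--Suzuki splitting followed by operator monotonicity. Abbreviate $A=\lambda\, d\Gamma(q_t\wt{O}_t q_t)$, $B=\lambda\sqrt{N}\phi_+(q_tO\ph_t)$, $A'=c\lambda\|O\|\cN_+(t)$, $h=q_tO\ph_t$, $M=q_t\wt{O}_tq_t$, and $\xi=\cW_N(t;0)\Omega$. The target reduces to the operator inequality
\[
e^{A+B}\leq e^{CN\|O\|^3\lambda^3}\, e^{B/2}\, e^{A'}\, e^{B/2}
\]
on $\cF^{\leq N}_{\perp\ph_t}$, which is then evaluated in the state $\xi$.

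First I would prove the symmetric Trotter--Suzuki estimate $e^{A+B}\leq e^{CN\|O\|^3\lambda^3}\,e^{B/2}e^{A}e^{B/2}$. Introduce $P(s)=e^{-sB/2}e^{s(A+B)}e^{-sB/2}$; direct differentiation gives $P(0)=1$, $P'(0)=A$, $P''(0)=A^2$, so $P(\lambda)$ and $e^{\lambda A}$ agree through second order. A threefold Duhamel iteration then represents $P(\lambda)-e^{\lambda A}$ as a $s$-integral of the iterated commutators $[A,[A,B]]$, $[B,[B,A]]$, and $[A,[B,A]]$ conjugated by the flows. Using (\ref{eq:comm3b}) and (\ref{eq:comm4}) one finds $[A,B]=-i\lambda^2\sqrt{N}\phi_-(Mh)$; then $[B,[A,B]]$ has leading part $2\lambda^3 N\,\mathrm{Re}\langle h,Mh\rangle(1-\cN/N)$, bounded by $C\lambda^3 N\|O\|^3$ thanks to $|\langle h,Mh\rangle|\leq\|M\|_{\mathrm{op}}\|h\|_2^2\leq c\|O\|^3$, plus a subleading $\lambda^3(a^*(Mh)a(h)+\mathrm{h.c.})$ bounded by $C\lambda^3\|O\|^3\cN\leq C\lambda^3 N\|O\|^3$ on $\cF^{\leq N}$; similarly $[A,[A,B]]=i\lambda^3\sqrt{N}\phi_+(M^2h)$ is bounded by $C\lambda^3 N\|O\|^3$ via (\ref{eq:bd-bb*}) together with $\cN\leq N$. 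Assembling these contributions produces the multiplicative remainder $e^{CN\|O\|^3\lambda^3}$.

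Second I would replace $e^A$ by $e^{A'}$. The algebraic observation $q_t(q_t\wt{O}_tq_t)=(q_t\wt{O}_tq_t)q_t$ yields $[d\Gamma(q_t\wt{O}_tq_t),d\Gamma(q_t)]=d\Gamma([q_t\wt{O}_tq_t,q_t])=0$; since these two self-adjoint operators commute and $\pm d\Gamma(q_t\wt{O}_tq_t)\leq c\|O\|\cN_+(t)$ by (\ref{eq:bd-dG}), ordinary functional calculus for commuting self-adjoint operators gives $e^A\leq e^{A'}$. Sandwiching by the positive operator $e^{B/2}$ preserves the inequality, and combining with the first step finishes the proof.

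The main technical obstacle is the first step: converting the additive remainder bound $\|P(\lambda)-e^{\lambda A}\|\lesssim \lambda^3 N\|O\|^3$ into a multiplicative exponentiated form that composes cleanly with the subsequent sandwiching. A clean way to bypass operator-inequality subtleties is to work directly with $\Phi(s)=\log\langle\xi,e^{sB/2}e^{sA}e^{sB/2}\xi\rangle-\log\langle\xi,e^{s(A+B)}\xi\rangle$, verify $\Phi(0)=\Phi'(0)=\Phi''(0)=0$ by matching Taylor coefficients, and establish the uniform bound $|\Phi'''(s)|\leq CN\|O\|^3$ for $s\in[0,\lambda]$ via the commutator calculus above together with the a priori bound $\cN\leq N$ on $\cF^{\leq N}$; this last bound is also what makes the hypothesis $\lambda\leq\|O\|^{-1}$ natural, since each factor of $\lambda\|O\|$ corresponds to one commutator insertion.
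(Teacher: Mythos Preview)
Your second step is correct and clean: since $q_t$ is a projection, $[q_t\wt{O}_tq_t,q_t]=0$, hence $d\Gamma(q_t\wt{O}_tq_t)$ and $\cN_+(t)=d\Gamma(q_t)$ commute, and the bound $\pm d\Gamma(q_t\wt{O}_tq_t)\leq \|\wt{O}_t\|\cN_+(t)$ upgrades to $e^A\leq e^{A'}$ by joint functional calculus. Sandwiching by the self-adjoint $e^{B/2}$ preserves this.

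The gap is in your first step. The commutator bounds you compute are right: $\|[A_0,[A_0,B_0]]\|$ and $\|[B_0,[A_0,B_0]]\|$ are indeed $O(N\|O\|^3)$. But neither of the two mechanisms you propose for turning this into the multiplicative estimate is complete. An operator inequality $e^{A+B}\leq e^{CN\|O\|^3\lambda^3}e^{B/2}e^Ae^{B/2}$ is generally false for non-commuting $A,B$ (think of Golden--Thompson, which only survives under trace), and your Duhamel remainder would in any case carry a factor $e^{C(\|A\|+\|B\|)}\sim e^{CN\lambda\|O\|}$ that is not controlled. The fallback via $\Phi(s)=\log F(s)-\log G(s)$ has the same problem in disguise: separately $(\log F)'''(s)$ and $(\log G)'''(s)$ are third cumulants in the evolved states and are only bounded by $C(N\|O\|)^3$; the cancellation down to $CN\|O\|^3$ holds at $s=0$ by BCH, but you assert it \emph{uniformly} in $s\in[0,\lambda]$ without argument. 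Showing that cancellation for $s>0$ amounts to controlling differences of expectations of $A_0+B_0$ in the two evolved normalized states $e^{sA_0/2}e^{sB_0/2}\xi$ and $e^{s(A_0+B_0)/2}\xi$, and those states are only $O(s^2N\|O\|^2)$ apart, which against $\|A_0+B_0\|\sim N\|O\|$ gives no gain. As written, step~1 is a restatement of the difficulty rather than a resolution.

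The paper avoids this entirely by a single-derivative Gronwall along the interpolation
\[
\xi_s=e^{(1-s)\lambda\kappa\cN_+(t)/2}\,e^{(1-s)\lambda\sqrt{N}\phi_+(q_tO\ph_t)/2}\,e^{s\lambda[d\Gamma(q_t\wt{O}_tq_t)+\sqrt{N}\phi_+(q_tO\ph_t)]/2}\,\cW_N(t;0)\Omega,
\]
computing $\partial_s\|\xi_s\|^2=2\,\mathrm{Re}\,\langle\xi_s,\cM_s\xi_s\rangle$ with an explicit generator $\cM_s$. The point is that the conjugation $e^{(1-s)\lambda\sqrt{N}\phi_+(h)/2}\,d\Gamma(q_t\wt{O}_tq_t)\,e^{-(1-s)\lambda\sqrt{N}\phi_+(h)/2}$ is evaluated \emph{exactly} via Proposition~\ref{prop:dG-con} (and then Proposition~\ref{prop:eN} for the $\cN_+$ conjugation), producing a finite sum of terms each of which is either $\leq C\|O\|\cN_+(t)$ (absorbed by the choice $\kappa=c\|O\|$) or already of size $C\lambda^2N\|O\|^3$. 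No third-derivative cancellation is needed because the exact formula never generates the dangerous $N^3$ terms in the first place. If you want to rescue your route, you would need an analogue of Proposition~\ref{prop:dG-con} to control the evolved states appearing in $\Phi'(s)$, at which point you are essentially reproducing the paper's argument.
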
 

\begin{proof}
For $s \in [0;1]$ and a fixed $\kappa > 0$, we define 
\[ \xi_{s} = e^{(1-s) \lambda \kappa \cN_+ (t) / 2} e^{(1-s) \lambda \sqrt{N} \phi_+ (q_t O \ph_t)/2} e^{s \lambda [ d\Gamma (q_t \wt{O}_t q_t ) + \sqrt{N} \phi_+ (q_t O \ph_t) ]/2} \cW_{N} (t;0) \Omega \; . \]
Note that $\xi_s \in \cF_{\perp \ph_t}^{\leq N}$ for all $s \in [0;1]$. Then, we have 
\[ \| \xi_{0} \|^2 = \left\langle \Omega, \cW_N^* (t;0)  e^{\lambda \sqrt{N} \phi_+ (q_t O \ph_t)/2} e^{\lambda \kappa \cN_+ (t)} e^{\lambda \sqrt{N} \phi_+ (q_t O \ph_t)/2} \cW_N (t;0) \Omega \right\rangle \]
and 
\[ \| \xi_{1} \|^2 = \left\langle \Omega, \cW_N^* (t;0)  e^{\lambda [ d\Gamma (q_t \wt{O}_t q_t ) + \sqrt{N} \phi_+ (q_t O \ph_t) ]} \cW_N (t;0) \Omega \right\rangle \; . \]
To compare $\| \xi_{1} \|^2$ with $\| \xi_{0} \|^2$, we compute the derivative 
\[ \partial_s \| \xi_{s} \|^2 = 2\text{Re } \langle \xi_{s} ; \partial_s \xi_{s} \rangle \; . \]
We have $\partial_s \xi_{s} = \cM_{s} \xi_{s}$, with 
\[ \cM_{s} = \frac{\lambda}{2} e^{(1-s) \lambda \kappa \cN_+ (t) /2} e^{(1-s) \lambda \sqrt{N} \phi_+ (q_t O \ph_t) /2} d\Gamma (q_t \wt{O}_t q_t ) e^{-(1-s) \lambda \sqrt{N} \phi_+ (q_t O \ph_t) /2} e^{-(1-s) \lambda \kappa \cN_+ /2} - \frac{\lambda \kappa}{2} \cN_+ (t) \; . \]
With Proposition \ref{prop:dG-con} we find, defining $h_t = (1-s) \lambda q_t O \ph_t$, 
\[ \begin{split} 
e^{(1-s) \lambda \sqrt{N} \phi_+ (q_t O \ph_t)} &\rd \Gamma (q_t \wt{O}_t q_t ) e^{-(1-s) \lambda \sqrt{N} \phi_+ (q_t O \ph_t)} \\ =& \; \rd \Gamma (q_t \wt{O}_t q_t) - N \frac{\sigma_{\| h_t \|}^2}{\| h_t \|^2} \langle h_t , \wt{O}_t  h_t \rangle \left(1 - \frac{\cN_+ (t)}{N} \right) + \left( \frac{\gamma_{\| h_t \|} - 1}{\| h_t \|^2} \right)^2  \langle h_t , \wt{O}_t  h_t \rangle a^* (h_t) a (h_t) \\ &+ \frac{\gamma_{\| h_t \|} -1}{\| h_t \|^2} (a^* (h_t) a (q_t \wt{O}_t h_t) + a^* (q_t \wt{O}_t h_t) a (h_t) ) \\ &+ \sqrt{N} \,  \frac{\sigma_{\| h_t \|}}{\| h_t \|} \frac{\gamma_{\| h_t \|} - 1}{\| h_t \|^2} \langle h_t , \wt{O}_t h_t \rangle i \phi_- (h_t) + \sqrt{N} \, \frac{\sigma_{\| h_t \|}}{\| h_t \|} i \phi_- (q_t \wt{O}_t h_t) \; . \end{split} \]
With Proposition \ref{prop:eN}, we obtain
\[ \begin{split} 
\frac{2}{\lambda} \, \cM_{s} =&\; \rd \Gamma (q_t \wt{O}_t q_t) - N \frac{\sigma_{\| h_t \|}^2}{\| h_t \|^2} \langle h_t , \wt{O}_t  h_t \rangle \left(1 - \frac{\cN_+ (t)}{N} \right) + \left( \frac{\gamma_{\| h_t \|} - 1}{\| h_t \|^2} \right)^2  \langle h_t , \wt{O}_t  h_t \rangle a^* (h_t) a (h_t) \\ &+ \frac{\gamma_{\| h_t \|} -1}{\| h_t \|^2} (a^* (h_t) a (q_t \wt{O}_t h_t) + a^* (q_t \wt{O}_t h_t) a (h_t) ) \\ &+ \sqrt{N} \,  \frac{\sigma_{\| h_t \|}}{\| h_t \|} \frac{\gamma_{\| h_t \|} - 1}{\| h_t \|^2} \langle h_t , \wt{O}_t h_t \rangle 
\left[ \cosh ((1-s)\lambda \kappa /2) i\phi_- (h_t) + \sinh ((1-s) \lambda \kappa /2) \phi_+ (h_t) \right] \\
&+ \sqrt{N} \, \frac{\sigma_{\| h_t \|}}{\| h_t \|} \left[ \cosh ((1-s) \lambda \kappa /2) i \phi_- (q_t \wt{O}_t h_t) + \sinh ((1-s) \lambda \kappa /2) \phi_+ (q_t \wt{O}_t h_t) \right] \\ &- \kappa \cN_+ (t) \; .\end{split} \]
Using the bounds (\ref{eq:bd-bb*}), (\ref{eq:bd-dG}), and the fact that $\| h_t \| \leq \lambda \| O \| \leq 1$ (from the assumption $\lambda \leq \| O \|^{-1}$), we find 
\[ \begin{split} \frac{2}{\lambda} \text{Re } \langle \xi_{s} , \partial_s \xi_{s} \rangle = \; &\frac{2}{\lambda} \text{Re } \langle \xi_{s} , \cM_{s} \xi_{s} \rangle 
\\ \leq \; &\left[ C \| O \| - \kappa \right] \| \cN_+^{1/2} (t) \xi_{s} \|^2 + C \lambda^2 N \| O \|^3  e^{\lambda \kappa} \| \xi_{s} \|^2  \; .\end{split} \]
Choosing $\kappa = c \| O \|$ (which also implies that $\lambda \kappa \leq c$), we conclude that 
\[ \partial_s \| \xi_{N,s} \|^2 \leq C N \| O \|^3 \lambda^3 \| \xi_{N,s} \|^2 \; . \]
By Gronwall, we obtain (\ref{eq:step1-1}).
\end{proof}

\begin{lemma}
\label{lm:step2}
For a bounded self-adjoint operator $O$ on $L^2 (\bR^3)$ with $\| \Delta O (1-\Delta)^{-1} \|_\text{op} < \infty$, we recall the notation $\vertiii{O}$ from (\ref{eq:Otriple}). Recall also that, for $0 \leq s \leq t$,  $f_{s;t}$ denotes the solution of the equation (\ref{eq:fst}). For given $c > 0$ there exists a constant $C > 0$ such that, with the definition 
\begin{equation}\label{eq:kappas} \kappa_s = c \| O \|_\text{op} \, e^{C (\| v \|_1 + \| v \|_\infty) s} + \frac{\vertiii{O}}{\| v \|_1 + \| v \|_\infty} \left( e^{C ( \|v \|_1 + \| v \|_\infty ) s} - 1 \right). 
\end{equation} 
we have   
\[  \begin{split} &\left\langle \Omega, \cW_N (t;0) e^{\lambda \sqrt{N} \phi_+  (q_t O \ph_t)/2} e^{c \| O \| \cN_+ (t)} e^{\lambda \sqrt{N} \phi_+ (q_t O \ph_t)/2} \cW_N (t;0) \Omega \right\rangle  \\ &\hspace{3cm} \leq 
e^{C N \lambda^3 \vertiii{O}^3 \exp (C (1+ \| v \|_1 + \| v \|_\infty) t)}  \left\langle \Omega, e^{\lambda \sqrt{N} \phi_+ (f_{0;t})/2}  e^{\lambda \kappa_t \cN_+ (0)} e^{\lambda \sqrt{N} \phi_+ (f_{0;t})/2}  \Omega \right\rangle \end{split} \]
for all $\lambda \leq  \vertiii{O}^{-1} e^{-C (\| v \|_\infty + \| v \|_{1} ) t}$.
\end{lemma}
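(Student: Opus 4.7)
I would prove the lemma by a Gronwall-type interpolation between $s=0$ and $s=t$. For $s\in [0,t]$, set
\begin{equation*}
H(s) := \left\langle \Omega,\; \cW_N^*(s;0)\, e^{\lambda\sqrt{N}\phi_+(f_{s;t})/2}\, e^{\lambda\kappa_{t-s}\,\cN_+(s)}\, e^{\lambda\sqrt{N}\phi_+(f_{s;t})/2}\, \cW_N(s;0)\,\Omega\right\rangle.
\end{equation*}
Since $f_{t;t} = q_t O \ph_t$, $\cW_N(0;0)=\mathrm{id}$, and $\kappa_0 = c\|O\|_{\mathrm{op}}$ by (\ref{eq:kappas}), the values $H(t)$ and $H(0)$ coincide (modulo the factor of $\lambda$ inherited from Lemma \ref{lm:step1}) with the left- and right-hand sides of the claim. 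It therefore suffices to show
\begin{equation*}
\partial_s H(s) \;\leq\; CN \lambda^3 \vertiii{O}^3 \, e^{C(1+\|v\|_1+\|v\|_\infty)s}\, H(s),
\end{equation*}
from which the bound follows by integration.

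To compute $\partial_s H(s)$ I would differentiate each time-dependent factor: the fluctuation dynamics $\cW_N(s;0)$ via $i\partial_s \cW_N = \cL_N(s) \cW_N$ (producing a commutator $i[\cL_N(s), A(s)]$ with $A(s)$ the triple product of inner exponentials), the outer exponentials via Proposition \ref{prop:partialt} applied with $h_s = \lambda f_{s;t}/2$, and the middle exponential, yielding both $\lambda\partial_s\kappa_{t-s}\,\cN_+(s)$ and a bounded correction $\lambda\kappa_{t-s}\, d\Gamma(\partial_s q_s)$. The central cancellation uses (\ref{eq:fst}): splitting $\cL_N = \cL_\infty + \cR_N$ with $\cL_\infty(s) = d\Gamma(h_H(s)+K_{1,s}) + \tfrac12\int[K_{2,s}(x;y)b_x^*b_y^* + \mathrm{h.c.}]$, the commutator $i[\cL_\infty(s),\phi_+(f_{s;t})]$ is, by (\ref{eq:comm3b}) and the analogous $b^*b^*$-$\phi_+$ commutator, a $\phi_\pm$-expression in $(h_H+K_{1,s}+JK_{2,s})f_{s;t}$ (up to $N^{-1}$ corrections from (\ref{eq:comm1})). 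By (\ref{eq:fst}) this combination equals $i\partial_s f_{s;t}$, and after the $\phi_+\leftrightarrow\phi_-$ conjugation through the middle exponential given by Proposition \ref{prop:eN}, it exactly cancels the leading $\sqrt{N}(\sigma/\|h_s\|)\phi_+(\partial_s h_s)$ term produced by Proposition \ref{prop:partialt}.

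What survives falls into three classes. (i) Higher-order contributions from conjugating $\cL_\infty$ through the outer exponentials, computed via Propositions \ref{prop:b-con} and \ref{prop:dG-con}, carrying factors $\|f_{s;t}\|^2$ bounded by $e^{C(\|v\|_1+\|v\|_\infty)s}$ via a standard Gronwall estimate on (\ref{eq:fst}). (ii) The cubic term $N^{-1/2}\int v(x-y)\ph_s(x)\,a_y^*a_{x'}b_{y'}$ and quartic term $(2N)^{-1}\int v(x-y)a_x^*a_y^*a_x a_y$ in (\ref{eq:cLN}), whose $N^{-1/2}$ and $N^{-1}$ prefactors combine, after conjugation via (\ref{eq:bx-exp})-(\ref{eq:axay}), with $\sqrt{N}$'s from $\phi_+(f_{s;t})$ to yield a net bound of order $N\lambda^3\vertiii{O}^3 e^{C(1+\|v\|_1+\|v\|_\infty)s}$. (iii) Subleading quadratic errors proportional to $\cN_+(s)$ with coefficient $O(\vertiii{O})$, absorbed by $\lambda\partial_s\kappa_{t-s}\cN_+(s)$ thanks to the design (\ref{eq:kappas}), which satisfies $\partial_s\kappa_s \geq C(\|v\|_1+\|v\|_\infty)\kappa_s + \vertiii{O}$. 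Any linear $\phi_\pm$ insertion is reduced to class (iii) via the elementary Cauchy-Schwarz bound $\pm\phi_\pm(g)\leq \varepsilon^{-1}\|g\|_2^2 + \varepsilon\,\cN_+$.

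The principal obstacle is the bookkeeping in class (ii): the non-normal-ordered quartic $\int v\,a^*a^*aa$ generates, after conjugation through the two outer $e^{\lambda\sqrt{N}\phi_+(f_{s;t})/2}$ via (\ref{eq:axay}), a long list of contractions, each of which must be estimated using (\ref{eq:bd-bb*})-(\ref{eq:bd-dG}) together with the a priori bound $\|(\cN_+(s)+1)^{1/2}\cW_N(s;0)\Omega\| \leq C$ standard for mean-field fluctuations. The restriction $\lambda \leq \vertiii{O}^{-1}e^{-C(\|v\|_1+\|v\|_\infty)t}$ is used exactly to guarantee $\lambda\|f_{s;t}\|_2 \leq 1$ uniformly on $[0,t]$, which keeps the hyperbolic factors $\gamma_{\lambda\|f_{s;t}\|/2}, \sigma_{\lambda\|f_{s;t}\|/2}$ appearing in Propositions \ref{prop:b-con}-\ref{prop:partialt} of order one and thereby closes the Gronwall loop.
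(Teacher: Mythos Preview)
Your plan is essentially the paper's own proof: a Gronwall interpolation in $s\in[0,t]$ on the norm of
$\xi_t(s)=e^{\lambda\kappa\,\cN_+(s)/2}e^{\lambda\sqrt{N}\phi_+(f_{s;t})/2}\cW_N(s;0)\Omega$, with the leading $\sqrt{N}\,\phi_-$ contribution from conjugating $\cL_N(s)$ cancelled against the $\phi_+(\partial_s f_{s;t})$ term of Proposition~\ref{prop:partialt} via (\ref{eq:fst}), the order-$\lambda$ remainders proportional to $\cN_+(s)$ absorbed by $\dot\kappa\,\cN_+(s)$, and the cubic and higher-order pieces bounded by $CN\lambda^3\vertiii{O}^3 e^{C(1+\|v\|_1+\|v\|_\infty)t}$. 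Your use of $\kappa_{t-s}$ in place of the paper's $\kappa_s$ is a harmless reparametrization (it actually matches the endpoints in the statement more transparently).

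One point to correct: in your class (ii) you invoke the a priori bound $\|(\cN_+(s)+1)^{1/2}\cW_N(s;0)\Omega\|\leq C$ to control the contractions coming from the cubic and quartic parts of $\cL_N$. This bound is neither needed nor directly usable here: the relevant expectations are taken in the state $\xi_t(s)$, not in $\cW_N(s;0)\Omega$, and the intervening factor $e^{\lambda\kappa\,\cN_+/2}$ amplifies high-particle sectors, so control of $\cN_+$ on $\cW_N(s;0)\Omega$ does not transfer. The paper's mechanism is different and simpler: since everything lives on the \emph{truncated} Fock space $\cF^{\leq N}_{\perp\ph_s}$, one has the hard cutoff $\cN_+(s)\leq N$, and every contraction that does not reduce to an $R_j$-type term (i.e.\ $\pm R_j\leq C\lambda(\vertiii{O}+(\|v\|_1+\|v\|_\infty)\kappa)\cN_+(s)$, which goes into class (iii)) is bounded in operator norm by $CN(\vertiii{O}e^{Ct}+\kappa)^3\lambda^3$ using only (\ref{eq:bd-bb*})--(\ref{eq:bd-dG}) and $\cN_+\leq N$. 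With this adjustment your outline goes through and coincides with the paper's argument.
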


\begin{proof}
For $s \in [0;t]$ and with $\kappa_s$ as in (\ref{eq:kappas}), we define 
\[ \xi_t (s) = e^{\lambda \kappa_s \cN_+ (s) /2} e^{\lambda \sqrt{N} \phi_+ (f_{s;t}) /2} \cW_N (s;0) \Omega \in \cF_{\perp \ph_s}^{\leq N} \]
With $\kappa_0 = c \| O \|$, we observe that 
\[ \| \xi_t (0) \|^2 = \langle \Omega,  e^{\lambda \sqrt{N} \phi_+ (f_{0;t}) /2} e^{c \lambda \| O \| \cN_+ (0)} e^{\lambda \sqrt{N} \phi_+ (f_{0;t}) /2} \Omega \rangle \; .\]
and that 
\[ \| \xi_t (t) \|^2 = \left\langle \Omega,  \cW_N (t;0)^* e^{\lambda \sqrt{N} \phi_+ (q_t O \ph_t) /2} e^{\lambda \kappa_t \cN_+ (t)} e^{\lambda \sqrt{N} \phi_+ (q_t O \ph_t) /2} \cW_N (t;0) \Omega \right\rangle \, .  \]
To compare $\| \xi_t (0) \|^2$ with $\| \xi_t (t) \|^2$, we are going to compute the derivative with respect to $s$. Since the two norms are taken on different spaces, it is convenient to embed first the $s$-dependent space $\cF_{\perp \ph_s}^{\leq N}$ into the full, $s$-independent, Fock space $\cF = \bigoplus_{n \geq 0} L^2 (\bR^{3n})^{\otimes_s n}$. To this end, we observe that 
\[ \begin{split} 
\| \xi_t (s) \|^2 = \; &\left\langle \Omega, \cW_N (s;0)^* e^{\lambda \sqrt{N} \phi_+ (f_{s;t}) /2}  e^{\lambda \kappa_s \cN_+ (s)} e^{\lambda \sqrt{N} \phi_+ (f_{s;t}) /2} \cW_N (s;0) \Omega  \right\rangle_{\cF} \\
= \; &\left\langle \Omega, \cW_N (s;0)^* e^{\lambda \sqrt{N} \phi_+ (f_{s;t}) /2}  e^{\lambda \kappa_s \cN} e^{\lambda \sqrt{N} \phi_+ (f_{s;t}) /2} \cW_N (s;0) \Omega  \right\rangle_{\cF} \end{split} \]
where $\cN$ denotes now the number of particles operator on $\cF$. Hence, we obtain 
\begin{equation}\label{eq:partialJ} \partial_s \| \xi_t (s) \|^2 = -i  \left\langle \xi_t (s) ; \left[ \cJ_{N,t} (s) - \cJ^*_{N,t} (s) \right]  \xi_t (s) \right\rangle \end{equation} 
with the generator (this formula holds if we interpret $\cJ_{N,t} (s)$ as a quadratic form 
on $\cF_{\perp \ph_s}^{\leq N}$)
\begin{equation}\label{eq:gene}
\begin{split}  \cJ_{N,t} (s) = \; &\frac{i\lambda}{2} \dot{\kappa}_s \, \cN_+ (s) + e^{\lambda \kappa_s \, \cN_+ (s) / 2} \left[ i \partial_s e^{\lambda \sqrt{N} \phi_+ (f_{s;t}) /2} \right] e^{-\lambda \sqrt{N} \phi_+ (f_{s;t}) /2} e^{-\lambda \kappa_s \, \cN_+  (s) /2} \\ &+ e^{\lambda \kappa_s \cN_+  (s)/ 2} e^{\lambda \sqrt{N} \phi_+ (f_{s;t}) /2} \cL_N (s) e^{-\lambda \sqrt{N} \phi_+ (f_{s;t}) /2} e^{-\lambda \kappa_s \cN_+ (s) / 2} \; . \end{split} \end{equation} 
Remark that only the antisymmetric part of $\cJ_{N,t} (s)$ contributes to the growth of the norm.

Next, we compute $\cJ_{N,t} (s)$, focusing in particular on its antisymmetric component. 
We recall the definition (\ref{eq:cLN}) of the generator $\cL_N (s)$.
% is 
%\begin{equation}\label{eq:cLN} \begin{split} \cL_N (s) = \; &d\Gamma (h_H (s) + K_{1,s}) + \frac{1}{2} %\int  \left[  K_{2,s} (x,y) b_x^* b_y^* +  \overline{K_{2,s} (x,y)} \, b_x b_y \right]  dx dy  - d\Gamma ((V*|%\ph_s|^2) + K_{1,s}) \frac{\cN_+}{N} \\ &+ \frac{1}{\sqrt{N}} \int V(x-y) \left[ b_x^* a_y^* a_x + a_x^* a_y %b_x \right] dx dy + \frac{1}{2N} \int V(x-y) a_x^* a_y^* a_y a_x \, dx dy \end{split}   \end{equation}
%where $h_H (s) = - \Delta + (V*|\ph_s|^2)$ and $K_{1,s}, K_{2,s}$ are the operators with the integral %kernels $K_{1,s} (x;y) = \ph_t (x) \overline{\ph}_t (y) V(x-y)$ and $K_{2,s} (x;y) = V(x-y) \ph_s (x) 
%\ph_s (y)$. 
%
%Remark that $\| K_{j,s} \|_\text{op} \leq \| K_{j,s} \|_\text{HS} \leq \| V \|_\infty$ for $j=1,2$ and all $s \in %[0;t]$. 
We introduce the notation $h_{s;t} = \lambda  f_{s;t} /2 \in L^2_{\perp \ph_s} (\bR^3)$. From (\ref{eq:prop1-2}), we find, on vectors in $\cF_{\perp \ph_s}^{\leq N}$ (since we consider matrix elements on vectors in $\cF_{\perp \ph_s}^{\leq N}$, we can replace the operator $h_H (s) + K_{1,s}$, which does not leave $L^2_{\perp \ph_s} (\bR^3)$ invariant, with its restriction to $L^2_{\perp \ph_s} (\bR^3)$; this is the reason why we can apply Prop. \ref{prop:dG-con})
\[ \begin {split} 
e^{\lambda \sqrt{N} \phi_+ (f_{s;t}) /2} &d\Gamma (h_H (s) + K_{1,s}) e^{-\lambda \sqrt{N} \phi_+ (f_{s;t}) /2} 
\\ = \; &d\Gamma (h_H (s) + K_{1,s}) + \sqrt{N} \, \frac{\sigma_{\| h_{s;t} \|}}{\| h_{s;t} \|} i \phi_- ((h_H (s) + K_{1,s}) h_{s;t}) \\ &- N \frac{\sigma^2_{\| h_{s;t} \|}}{\| h_{s;t} \|^2} \langle h_{s;t} , (h_H (s) + K_{1,s}) h_{s;t} \rangle (1 - \cN_+ (s) / N) \\ &+ \frac{\gamma_{\| h_{s;t} \|} - 1}{ \| h_{s;t} \|^2} \left(a^* (h_{s;t}) a ((h_H (s) + K_{1,s}) h_{s;t}) + a^*((h_H (s) + K_{1,s}) h_{s;t})   a (h_{s;t}) \right) \\ &+ \sqrt{N} \frac{\sigma_{\| h_{s;t} \|}}{\| h_{s;t} \|} \frac{\gamma_{\| h_{s;t} \|} - 1}{\| h_{s;t} \|^2} \langle  h_{s;t} , (h_H (s) + K_{1,s}) h_{s;t} \rangle i \phi_- (h_{s;t}) \\ &+ \left( \frac{\gamma_{\| h_{s;t} \|} -1}{\| h_{s;t} \|^2} \right)^2 \langle  h_{s;t} , (h_H (s) + K_{1,s}) h_{s;t} \rangle a^* (h_{s;t}) a (h_{s;t}) \; .\end{split} \]

With Prop. \ref{prop:eN}, we obtain, again in the sense of forms on $\cF_{\perp \ph_s}^{\leq N}$, 
\[ \begin {split} 
 e^{\lambda \kappa_s \cN_+ (s)/ 2} & e^{\lambda \sqrt{N} \phi_+ (f_{s;t}) /2} d\Gamma (h_H (s) + K_{1,s}) e^{-\lambda \sqrt{N} \phi_+ (f_{s;t}) /2} e^{-\lambda \kappa_s \cN_+ (s) / 2} \\ 
 = \; &d\Gamma (h_H (s) + K_{1,s}) \\ &+ \sqrt{N} \, \frac{\sigma_{\| h_{s;t} \|}}{\| h_{s;t} \|} \left[ \cosh (\lambda \kappa_s /2)  i \phi_- ((h_H (s) + K_{1,s}) h_{s;t}) - \sinh (\lambda \kappa_s /2) \phi_+ ((h_H (s) + K_{1,s}) h_{s;t}) \right]  \\ &- N \frac{\sigma^2_{\| h_{s;t} \|}}{\| h_{s;t} \|^2} \langle h_{s;t} , (h_H (s) + K_{1,s}) h_{s;t} \rangle (1 - \cN_+ / N) \\ &+ \frac{\gamma_{\| h_{s;t} \|} - 1}{ \| h_{s;t} \|^2} \left(a^* (h_{s;t}) a ((h_H (s) + K_{1,s}) h_{s;t}) + a^*((h_H (s) + K_{1,s}) h_{s;t})   a (h_{s;t}) \right) \\ &+ \sqrt{N} \frac{\sigma_{\| h_{s;t} \|}}{\| h_{s;t} \|} \frac{\gamma_{\| h_{s;t} \|} - 1}{\| h_{s;t} \|^2} \langle  h_{s;t} , (h_H (s) + K_{1,s}) h_{s;t} \rangle \left[ \cosh (\lambda \kappa_s /2) i \phi_- (h_{s;t}) - \sinh (\lambda \kappa_s/2) \phi_+ (h_{s;t}) \right] \\ &+ \left( \frac{\gamma_{\| h_{s;t} \|} -1}{\| h_{s;t} \|^2} \right)^2 \langle  h_{s;t} , (h_H (s) + K_{1,s}) h_{s;t} \rangle a^* (h_{s;t}) a (h_{s;t}) \; .\end{split} \] 
Removing symmetric terms (which do not contribute to (\ref{eq:partialJ})) and focussing on terms that are at most quadratic in $\lambda$ (recall that $h_{s;t} = \lambda f_{s;t} /2$), we arrive at 
\begin{equation}
\label{eq:LN-1}
\begin{split}  e^{\lambda \kappa_s \cN_+ (s) / 2} & e^{\lambda \sqrt{N} \phi_+ (f_{s;t}) /2} d\Gamma (h_H (s) + K_{1,s}) e^{-\lambda \sqrt{N} \phi_+ (f_{s;t}) /2} e^{-\lambda \kappa_s \cN_+ (s) / 2} \\ &\hspace{4cm} = \; \frac{i \lambda \sqrt{N}}{2}  \phi_- ((h_H (s) + K_{1,s}) f_{s;t}) + S_1 + T_1 
 \end{split} 
\end{equation}
where $S_1 = S_1^*$ does not contribute to the antisymmetric part of $\cJ_{N,t} (s)$ and \[ \| T_1 \|_\text{op} \leq C N ( \vertiii{O} e^{Ct} + \kappa_s)^3  \lambda^3.\] 
for all $\lambda > 0$ with $\lambda \| O \| \leq 1$ and $\lambda \kappa_s \leq 1$ for all $s \in [0;t]$. Here we used that \[ \| (h_H (s) + K_{1,s} ) f_{s;t} \|  \leq C \vertiii{O} e^{C t} \] 
for all $s \in [0;t]$, $ t > 0$. This follows from the estimate $\| \ph_t \|_{H^4} \leq C e^{C|t|}$, for a constant $C > 0$ depending on $\| \ph \|_{H^4}$ (propagation of high Sobolev norms for the Hartree equation is standard; see \cite{caze}). 

To handle the quadratic off-diagonal term with kernel $K_{2,s}$ in (\ref{eq:cLN}), we apply (\ref{eq:bx-exp}) (and its hermitian conjugate, with $h$ replaced by $-h$, for $b_x^*$, $b_y^*$) and then Prop. \ref{prop:eN}. 
Removing the symmetric part and keeping track only of contributions that are at most quadratic in $\lambda$, we find  
\[ \begin{split} 
e^{\lambda \kappa_s \cN_+ (s) / 2} & e^{\lambda \sqrt{N} \phi_+ (f_{s;t}) /2} \left( \int \left[ K_{2,s} (x;y) b_x b_y  + \overline{K}_{2,s} (x;y) b_x^* b_y^* \right] \, dx dy \right)  e^{-\lambda \sqrt{N} \phi_+ (f_{s;t}) /2} e^{-\lambda \kappa_s \cN_+ (s) / 2} \\ 
=\; &- \lambda \kappa_s \int \left[ K_{2,s} (x;y) b_x b_y - \overline{K}_{2,s} (x;y)  b^*_x b^*_y \right] \, dx dy \\ 
&- \lambda \sqrt{N} \left[ \left(1- \frac{\cN_+  (s) +1/2}{N} \right) b (\overline{K_{2;s} f_{s;t}}) - 
b^* (\overline{K_{2;s} f_{s;t}}) \left(1- \frac{\cN_+ (s)  +1/2}{N} \right) \right] \\ 
&+ \frac{\lambda}{2\sqrt{N}}  \left[  \int dx dy K_{2,s} (x;y) b^* (f_{s;t}) a_x a_y -   \int dx dy \overline{K}_{2,s} (x,y) a^*_y a_x^*  \, b (f_{s;t}) \right] \\ 
&+ \frac{\lambda}{2\sqrt{N}} \left[  \int dx dy \, K_{2,s} (x;y) a^* (f_{s;t}) a_x b_y - 
\int dx dy \, \overline{K}_{2,s} (x;y) b_y^* a_x^*  \, a(f_{s;t}) \right]  \\ &+ S_2  + T_2 
\end{split} \]
where $S_2 = S_2^*$ and $\| T_2 \|_\text{op}  \leq C N  (\vertiii{O} + \kappa_s)^3 \lambda^3$ for all $s \in [0;t]$, if $\lambda \| O \| \leq 1$ and $\lambda \kappa_s \leq 1$ for all $s \in [0;t]$. Thus, we obtain 
\begin{equation}\label{eq:LN-2} \begin{split}  e^{\lambda \kappa_s \cN_+ (s) / 2} & e^{\lambda \sqrt{N} \phi_+ (f_{s;t}) /2} \left( \frac{1}{2} \int \left[ K_{2,s} (x;y) b_x b_y  + \overline{K}_{2,s} (x;y) b_x^* b_y^* \right] \, dx dy \right)  e^{-\lambda \sqrt{N} \phi_+ (f_{s;t}) /2} e^{-\lambda \kappa_s \cN_+  (s) / 2} \\ 
&\hspace{8cm} = - \frac{i \lambda \sqrt{N}}{2} \phi_- (\overline{K_{2;s} f_{s;t}}) + S_2 + T_2 + i R_2 \end{split} \end{equation} 
where $S_2 = S_2^*$, $\| T_2 \|_\text{op} \leq C N   (\vertiii{O}+ \kappa_s)^3 \lambda^3$ and 
\[ \pm R_2 \leq C (\kappa_s \| v \|_\infty + \vertiii{O}) \lambda \cN_+ (s)  \]
for all $s \in [0;t]$ and all $\lambda > 0$ with $\lambda \| O \| \leq 1$ and $\lambda \kappa_s \leq 1$ for all $s \in [0;t]$ (here we used that $\| K_{2,s} \|_\text{op} \leq \| K_{2,s} \|_\text{HS} \leq \| v \|_\infty$ for all $s \in [0;t]$).

Setting $d_s = (v * |\ph_s|^2) + K_{1,s}$ and using Prop. \ref{prop:dG-con} and then Prop. \ref{prop:eN}, 
we obtain  
\[ \begin{split} 
&e^{\lambda \kappa_s \cN_+ (s) / 2}  e^{\lambda \sqrt{N} \phi_+ (f_{s;t}) /2} d\Gamma (d_s) (\cN_+ (s) /N) e^{-\lambda \sqrt{N} \phi_+ (f_{s;t}) /2} e^{-\lambda \kappa_s \cN_+ (s) / 2} \\ = \; & \frac{1}{2\sqrt{N}}  \left[  d\Gamma  (d_s) i \phi_- (h_{s;t}) + i \phi_- (h_{s;t}) d\Gamma (d_s) \right] + \frac{1}{2\sqrt{N}} \left[ i\phi_- (d_s h_{s;t}) \cN_+ + \cN_+ i \phi_- (d_s h_{s;t}) \right] + S_3 + T_3 
\end{split} \]
with $S_3^* = S_3$ and $\| T_3 \|_\text{op} \leq C N  (\vertiii{O} + \kappa_s)^3  \lambda^3$. We conclude that 
\begin{equation}\label{eq:LN-3} e^{\lambda \kappa_s \cN_+ (s) / 2}  e^{\lambda \sqrt{N} \phi_+ (f_{s;t}) /2} d\Gamma (d_s) (\cN_+ (s) /N) e^{-\lambda \sqrt{N} \phi_+ (f_{s;t}) /2} e^{-\lambda \kappa_s \cN_+ (s)  / 2} =  S_3 + T_3 + i R_3 \end{equation} 
where $S_3 = S_3^*$, $\| T_3 \|_\text{op} \leq C N  (\vertiii{O} + \kappa_s)^3  \lambda^3$ and 
\[ \pm R_3 \leq C \vertiii{O}  \lambda \cN_+ (s)  \] 
for all $s \in [0;t]$ and all $\lambda > 0$ with $\lambda \| O \| \leq 1$ and $\lambda \kappa_s \leq 1$ for all $s \in [0;t]$. 

We consider now 
\[ \mathcal{C} = \frac{1}{\sqrt{N}} \int dx dy \, v(x-y) \, \left[ b_x^* a_y^* a_x + a_x^* a_y b_x \right] \]
Conjugating separately $b_x^*$ and $a_y^* a_x$ (or $a_x^* a_y$ and $b_x$ in the second term), we arrive, using (\ref{eq:bx-exp}) (and its hermitian conjugate),  (\ref{eq:axay}) and then Prop. \ref{prop:eN}, at 
\[ \begin{split}  
e^{\lambda \kappa_s \cN_+ (s) / 2} &e^{\sqrt{N} \phi_+ (h_{s;t})} \mathcal{C} e^{-\sqrt{N} \phi_+ (h_{s;t})} e^{- \lambda \kappa_s \cN_+ (s) / 2} 
\\ =\; &\frac{\lambda \kappa_s}{2\sqrt{N}} \int dx dy \, v(x-y) \, \left[ b_x^* a_y^* a_x - a_x^* a_y b_x \right] - \frac{\lambda}{2}  \int  dx dy \, v(x-y)  \left[ f_{s;t} (y) b_x^* b_x - \overline{f_{s;t} (y)} b_x^* b_x  \right] \\
&- \frac{\lambda}{2} \int dx dy \, v(x-y) \left[ f_{s;t} (x) b_x^* b_y^* - \overline{f_{s;t} (x)} b_y b_x  \right] \\ &+ \frac{\lambda}{2}  \int dx dy \, v(x-y) \left[ \overline{f_{s;t} (x)} (1-\cN_+ / N) a_y^* a_x - f_{s;t} (x) a_x ^* a_y (1-\cN_+ /N) \right]  \\
&- \frac{\lambda}{2} \frac{1}{N}  \int dx dy \, v(x-y) \left[ a_x^* a(f_{s;t} ) a_y^* a_x - a_x^* a_y a^* (f_{s;t}) a_x \right] + S_4 + T_4 \end{split} \]
where $S_4^* = S_4$ and $\| T_4 \|_\text{op} \leq C N  ( \vertiii{O}   + \kappa_s)^3 \lambda^3$, for all $s \in [0;t]$ and all $\lambda > 0$ with $\lambda \| O \| \leq 1$ and $\lambda \kappa_s \leq 1$ for all $s \in [0;t]$. We obtain that 
\begin{equation}\label{eq:LN-4} e^{\lambda \kappa_s \cN_+ (s) / 2} e^{\sqrt{N} \phi_+ (h_{s;t})} \mathcal{C} e^{-\sqrt{N} \phi_+ (h_{s;t})} e^{- \lambda \kappa_s \cN_+ (s) / 2}  = S_4 + T_4 + i R_4 \end{equation} 
where $S_4 = S_4^*$, $\| T_4 \|_\text{op} \leq C N ( \vertiii{O}   + \kappa_s)^3   \lambda^3$ and 
\[ \pm R_4 \leq C (\vertiii{O} +  (\| v \|_1 + \| v \|_\infty ) \kappa_s) \lambda \cN_+ (s)  \, .\] 

Finally, we consider the term 
\[ \mathcal{V} = \frac{1}{2N} \int dx dy \, v(x-y) a_x^* a_y^* a_y a_x = \frac{1}{2N} \int dx dy \, v (x-y) a_x^* a_x a_y^* a_y - \frac{v(0)}{2N} \cN_+ (s) \; .\]
Conjugating separately $a_x^* a_x$ and $a_y^* a_y$ (and also the operator $\cN_+ (s)$, using Prop. \ref{prop:dG-con}), we obtain 
\[  \begin{split}  
 e^{\lambda \kappa_s \cN_+ (s) / 2} &e^{\lambda \sqrt{N} \phi_+ (f_{s;t})/2} \mathcal{V} e^{- \lambda \sqrt{N} \phi_+ (f_{s;t})/2} e^{- \lambda \kappa_s \cN_+ (s) / 2} \\ & \hspace{4cm} =
\frac{\lambda}{2\sqrt{N}} \int dx dy \, v(x-y) \, \left[ a_x^* a_x \overline{f_{s;t} (y)} b_y - b_y^* f_{s;t} (y) a_x^* a_x \right] +S_5 + T_5 \end{split} \] 
where $S_5 = S_5^*$ and $\| T_5 \|_\text{op} \leq C N ( \vertiii{O} + \kappa_s)^3  \lambda^3$. Thus 
\begin{equation}\label{eq:LN-5} e^{\lambda \kappa_s \cN_+ (s) / 2} e^{\lambda \sqrt{N} \phi_+ (f_{s;t})/2} \mathcal{V} e^{- \lambda \sqrt{N} \phi_+ (f_{s;t})/2} e^{- \lambda \kappa_s \cN_+ (s) / 2} = S_5 + T_5 +  iR_5 \end{equation} 
with $S_5 = S_5^*$, $\| T_5 \|_\text{op} \leq C N ( \vertiii{O} + \kappa_s)^3  \lambda^3$ and\[ \pm R_5 \leq C \vertiii{O}  \lambda \cN_+ (s) \] 
for all $s \in [0;t]$ and all $\lambda > 0$ with $\lambda \| O \| \leq 1$ and $\lambda \kappa_s \leq 1$ for all $s \in [0;t]$. 

Combining (\ref{eq:LN-1}), (\ref{eq:LN-2}), (\ref{eq:LN-3}), (\ref{eq:LN-4}) and (\ref{eq:LN-5}), we conclude that 
\[ \begin{split} 
e^{\lambda \kappa_s \cN_+ (s) / 2} &e^{\sqrt{N} \phi_+ (h_{s;t})} \cL_N (s) e^{-\sqrt{N} \phi_+ (h_{s;t})} e^{- \lambda \kappa_s \cN_+ (s) / 2} \\ &\hspace{4cm} = \frac{i \lambda \sqrt{N}}{2}  \phi_- ((h_H (s) + K_{1,s} + J K_{2,s}) f_{s;t}) + S +T + i R  \end{split}  \]
where $S^* = S$, $\| T \|_\text{op} \leq C N ( \vertiii{O} e^{Ct} + \kappa_s)^3 \lambda^3$ and \[ \pm R \leq C \lambda (\vertiii{O}  + (\| v \|_\infty + \| v \|_1) \kappa_s) \cN_+ (s)  \]
for all $s \in [0;t]$ and all $\lambda > 0$ with $\lambda \| O \| \leq 1$ and $\lambda \kappa_s \leq 1$ for all $s \in [0;t]$. 

Let us now focus on the second term on the r.h.s. of (\ref{eq:gene}). With Prop. \ref{prop:partialt} we find, in the sense of forms on $\cF_{\perp \ph_s}^{\leq N}$ and keeping track only of contributions that are antisymmetric and at most quadratic in $\lambda$,
\[  \begin{split} e^{\lambda \kappa_s \cN_+ (s) / 2} &\left[ i \partial_s e^{\lambda \sqrt{N} \phi_+ (f_{s;t}) /2} \right] e^{-\lambda \sqrt{N} \phi_+ (f_{s;t}) /2} e^{-\lambda \kappa_s  \cN_+ (s) /2} =  - \frac{i \lambda \sqrt{N}}{2} \,  \phi_- ( i \partial_s f_{s;t}) + \wt{S} + \wt{T}
\end{split}  \]
where $\wt{S} = \wt{S}^*$ and $\| \wt{T} \|_\text{op} \leq C N ( \vertiii{O}+ \kappa_s)^3 \lambda^3$. 

From (\ref{eq:fst}) and (\ref{eq:gene}) we conclude that 
\[ \begin{split}  \pm \frac{1}{i} \left[ \cJ_{N,t} (s) - \cJ_{N,t}^* (s) \right]  \leq \; &C N ( \vertiii{O} e^{Ct} + \kappa_s)^3 \lambda^3 + \lambda \left[ C (\vertiii{O} + ( \| v \|_\infty  + \| v \|_1) \kappa_s) -  \dot{\kappa}_s \right] \cN_+ (s) \end{split} \]
for all $s \in [0;t]$ and all $\lambda > 0$ with $\lambda \| O \| \leq 1$ and $\lambda \kappa_s \leq 1$ for all $s \in [0;t]$. 

With the choice (\ref{eq:kappas}), we find 
\[ \begin{split} 
\pm \frac{1}{i} \left[ \cJ_{N,t} (s) - \cJ_{N,t}^* (s) \right]  \leq \; &C N ( \vertiii{O}e^{Ct} + \kappa_s)^3 \lambda^3  \leq C N \vertiii{O}^3  e^{C  (1+  \| v \|_1 + \| v \|_\infty) t} \lambda^3\end{split} \]
for all $s \in [0;t]$ and all $\lambda \leq C \vertiii{O}^{-1} e^{-C (\| v \|_1 + \| v \|_\infty) t}$. 

Inserting in (\ref{eq:partialJ}) we obtain that 
\[ \left| \partial_s \| \xi_t (t) \|^2 \right| \leq C N \lambda^3 \vertiii{O} e^{C  (1+  \| v \|_1 + \| v \|_\infty) t} \, \| \xi_t (s) \|^2  \; .\]
By Gronwall, we arrive at
\[ \| \xi_t (t) \|^2 \leq e^{C N \lambda^3 \vertiii{O}^3 \exp (C (1+ \| v \|_1 + \| v \|_\infty) t)} \, \| \xi_t (0) \|^2 \]
for all $s \in [0;t]$ and all $\lambda \leq \vertiii{O}^{-1} e^{-C (\| v \|_1 + \| v \|_\infty) t}$. 
\end{proof}

\begin{lemma} \label{lm:step3}
Let $\kappa_t$ be defined as in (\ref{eq:kappas}). Then there exists a constant $C > 0$ such that 
\begin{equation}\label{eq:step3} \left\langle \Omega, e^{\lambda \sqrt{N} \phi_+ (f_{0;t})/2}  e^{\lambda \kappa_t \cN_+ (0)} e^{\lambda \sqrt{N} \phi_+ (f_{0;t})/2}  \Omega \right\rangle \leq e^{\lambda^2 N \| f_{0;t} \|^2/2} e^{C N \lambda^3 \vertiii{O}^3 \exp (C (\| v \|_1 + \| v \|_\infty) t)} \end{equation}
for all $\lambda \leq \vertiii{O}^{-1} e^{-C ( \| v \|_\infty + \| v \|_1) t}$ and all $t > 0$. 
\end{lemma}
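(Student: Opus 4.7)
The plan is to first reduce the left-hand side of (\ref{eq:step3}) to the squared norm of a single vector, obtained from the vacuum by exponentiating a linear combination of $b(f_{0;t})$ and $b^*(f_{0;t})$. Since $\cN_+(0)\Omega = 0$, one can freely insert $e^{-\lambda\kappa_t\cN_+(0)/2}\cdot e^{\lambda\kappa_t\cN_+(0)/2}$ next to $\Omega$ on both sides of the inner product; combining this with the self-adjointness of the factors and Proposition \ref{prop:eN} yields
\[ \left\langle \Omega, e^{\lambda\sqrt{N}\phi_+(f_{0;t})/2}\, e^{\lambda\kappa_t\cN_+(0)}\, e^{\lambda\sqrt{N}\phi_+(f_{0;t})/2}\Omega\right\rangle = \left\|e^{\lambda\wt X}\Omega\right\|^2, \]
with $\wt X := (\sqrt{N}/2)\bigl[e^{\lambda\kappa_t/2}b(f_{0;t}) + e^{-\lambda\kappa_t/2}b^*(f_{0;t})\bigr]$ obtained by conjugating $(\sqrt{N}/2)\phi_+(f_{0;t})$ by $e^{\lambda\kappa_t\cN_+(0)/2}$.

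Next, I introduce the interpolation $F(\tau) := \|e^{\tau\wt X}\Omega\|^2$ for $\tau\in[0,\lambda]$, so that $F(0)=1$ and $F(\lambda)$ equals the quantity to bound. Since $\wt X + \wt X^* = \sqrt{N}\cosh(\lambda\kappa_t/2)\phi_+(f_{0;t})$, direct differentiation gives
\[ F'(\tau) = \sqrt{N}\cosh(\lambda\kappa_t/2)\, \left\langle e^{\tau\wt X}\Omega,\, \phi_+(f_{0;t})\, e^{\tau\wt X}\Omega\right\rangle. \]
To turn this into a Gronwall inequality I evaluate the right-hand side using the commutator expansions from Lemma \ref{lm:comm_b} and Propositions \ref{prop:b-con}--\ref{prop:dG-con}. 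In the leading canonical approximation (treating $b,b^*$ as satisfying exact CCR), the state $e^{\tau\wt X}\Omega$ reduces to a squeezed coherent state and an explicit Wick-type calculation produces
\[ F'(\tau)/F(\tau) = \tau N\|f_{0;t}\|^2\, (1+e^{-\lambda\kappa_t})/2, \]
which, integrated from $0$ to $\lambda$ together with $(1+e^{-\lambda\kappa_t})/4 \leq 1/2$, yields precisely the leading Gaussian factor $\exp(\lambda^2 N\|f_{0;t}\|^2/2)$.

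The main technical obstacle is to rigorously control the corrections arising from the non-canonical commutator $[b(f_{0;t}), b^*(f_{0;t})] = \|f_{0;t}\|^2(1-\cN_+/N) - a^*(f_{0;t})a(f_{0;t})/N$, which generate additional moments of $\cN_+$ and $a^*(f_{0;t})a(f_{0;t})$ in the state $e^{\tau\wt X}\Omega$. Using the a priori bounds $\|f_{0;t}\|_2 \leq C\vertiii{O}e^{Ct}$ (from propagation of regularity along (\ref{eq:fst})) and $\kappa_t \leq C\vertiii{O}\exp(C(\|v\|_1+\|v\|_\infty)t)$, together with the hypothesis $\lambda\leq\vertiii{O}^{-1}\exp(-C(\|v\|_\infty+\|v\|_1)t)$ (which ensures $\lambda\kappa_t \leq C$ and $\lambda\|f_{0;t}\|\leq 1$, so that the relevant BCH expansions converge), these corrections contribute at most $CN\tau^2\vertiii{O}^3\exp(C(\|v\|_1+\|v\|_\infty)t)$ to $F'/F$. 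Gronwall integration then produces the additional factor $\exp(CN\lambda^3\vertiii{O}^3\exp(C(\|v\|_1+\|v\|_\infty)t))$ appearing in (\ref{eq:step3}), completing the proof.
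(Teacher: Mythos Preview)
Your reduction to $\|e^{\lambda\wt X}\Omega\|^2$ is valid, but the coefficients in $\wt X$ are swapped: conjugating $\phi_+(f_{0;t})$ by $e^{\lambda\kappa_t\cN_+/2}$ gives $e^{-\lambda\kappa_t/2}b(f_{0;t})+e^{+\lambda\kappa_t/2}b^*(f_{0;t})$ (the creation part is enhanced). Consequently the CCR computation yields $F'/F=\tau N\|f_{0;t}\|^2(1+e^{+\lambda\kappa_t})/2$, and your clean inequality $(1+e^{-\lambda\kappa_t})/4\le 1/2$ is an artifact of the sign error. With the correct sign the excess $(e^{\lambda\kappa_t}-1)/4$ has to be absorbed into the cubic remainder; this still works since $\lambda\kappa_t\le C\lambda\vertiii{O}\,e^{C(\|v\|_1+\|v\|_\infty)t}$, but the leading CCR term does \emph{not} by itself give the sharp Gaussian factor.

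The substantive gap is your final paragraph. You assert that the non-canonical corrections contribute at most $CN\tau^2\vertiii{O}^3\,e^{\cdots}$ to $F'/F$, but you give no mechanism for this. If you push $b(f_{0;t})$ through $e^{(\tau\sqrt N/2)\phi_+(f_{0;t})}$ to act on $\Omega$ (this is what Prop.~\ref{prop:b-con} provides; the propositions you cite do \emph{not} give conjugation formulas for the non-self-adjoint $e^{\tau\wt X}$), the result is not a scalar multiple of $\Omega$: it contains terms proportional to $b^*(f_{0;t})\Omega$ and, at higher order, contributions from $\cN_+/N$ and $a^*(f_{0;t})a(f_{0;t})/N$. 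Feeding these back produces inner products such as $\langle\zeta_\tau,\,e^{\lambda\kappa_t\cN_+/2}e^{(\tau\sqrt N/2)\phi_+}b^*(f_{0;t})\Omega\rangle$ and $\langle\zeta_\tau,\cN_+\zeta_\tau\rangle$, neither of which is controlled by $F(\tau)$ alone. One would need a coupled Gronwall system for $F$ together with $\cN_+$-moments, and you have not set this up.

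The paper avoids this difficulty by a different interpolation. With $h_t=\lambda f_{0;t}/2$ it sets
\[
\xi_s=e^{\lambda\kappa_t\cN_+/2}\,e^{s\sqrt N\phi_+(h_t)}\,e^{(1-s)\sqrt N b^*(h_t)}\,e^{(1-s)\sqrt N b(h_t)}\,e^{(1-s)^2N\|h_t\|^2/2}\,\Omega,
\]
so that $\|\xi_1\|^2$ is the quantity to bound and $\|\xi_0\|^2=e^{N\|h_t\|^2}\langle e^{\sqrt N b^*(h_t)}\Omega,\,e^{\lambda\kappa_t\cN_+}e^{\sqrt N b^*(h_t)}\Omega\rangle$ can be evaluated \emph{explicitly} as $e^{N\|h_t\|^2}(1+\|h_t\|^2e^{\lambda\kappa_t})^N$, yielding the Gaussian factor plus a cubic correction. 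The point of this normal-ordering interpolation is that the generator $\cG_s$ (with $\partial_s\xi_s=\cG_s\xi_s$) computes, via Prop.~\ref{prop:b-con} and~\ref{prop:dG-con}, to
\[
\cG_s=-(1-s)\|h_t\|^2\cN_+-(1-s)a^*(h_t)a(h_t)+T,\qquad \|T\|_{\text{op}}\le CN\lambda^3\vertiii{O}^3 e^{C(\|v\|_1+\|v\|_\infty)t}.
\]
The dangerous $\cN_+$ and $a^*a$ terms appear with a \emph{negative} sign, so $2\,\mathrm{Re}\langle\xi_s,\cG_s\xi_s\rangle\le 2\|T\|_{\text{op}}\|\xi_s\|^2$ and Gronwall closes immediately, without any a priori control of $\cN_+$-moments. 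This sign-definiteness is the idea your argument is missing.
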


{\it Remark:} The lemma could be extended to bound the expectation on the l.h.s. of (\ref{eq:step3}) 
for a larger class of states, including quasi-free states, rather than only in the vacuum. This 
would allow us to consider more general initial data in Theorem \ref{thm:main}. To keep the focus on the main novelty of our paper (the possibility of proving a large deviation principle for many-body quantum dynamics), we restricted our attention on the simplest case of factorized initial data (leading to the vacuum in (\ref{eq:step3}).

\begin{proof}
For $s \in [0;1]$ and setting $h_t = \lambda f_{0;t}/2 \in L^2_{\perp \ph} (\bR^3)$, we define  
\[ \xi_s = e^{\lambda \kappa_t \cN_+ (0) /2} e^{s \sqrt{N} \phi_+ (h_t)} e^{(1-s) \sqrt{N} b^* (h_t)} e^{(1-s) \sqrt{N} b(h_t)} e^{(1-s)^2 N \| h_t \|^2 /2} \Omega \; .\]
Then 
\[ \| \xi_1 \|^2 =  \left\langle \Omega, e^{\lambda \sqrt{N} \phi_+ (f_{0;t})/2}  e^{\lambda \kappa_t \cN_+ (0)} e^{\lambda \sqrt{N} \phi_+ (f_{0;t})/2}  \Omega \right\rangle \]
is the quantity we want to estimate, while
\begin{equation}\label{eq:step3-1} \| \xi_0 \|^2 = e^{N \| h_t \|^2} \langle e^{\sqrt{N} b^* (h_t)} \Omega , e^{\lambda \kappa_t \cN_+ (0)} e^{\sqrt{N} b^* (h_t)} \Omega \rangle \end{equation}
is going to give the bound on the r.h.s. of (\ref{eq:step3}). 

To compare $\| \xi_1 \|^2$ with $\| \xi_0 \|^2$, we compute the derivative
\begin{equation}\label{eq:norm-gr} \partial_s \| \xi_s \|^2 = 2 \text{Re } \langle \xi_s , \cG_s \xi_s \rangle \end{equation}
where 
\[ \begin{split}  \cG_s = \; &- (1-s) N \| h_t \|^2 \\ &+ \sqrt{N}  e^{\lambda \kappa_t \cN_+ (0) /2} e^{s \sqrt{N} \phi_+ (h_t)} \\ &\hspace{1.5cm} \times  \left[ \phi_+ (h_t) -  b^* (h_t) -  e^{(1-s) \sqrt{N}  b^* (h_t)} b (h_t) e^{-(1-s) \sqrt{N} b^* (h_t)} \right] e^{-s\sqrt{N} \phi_+ (h_t)} e^{-\lambda \kappa_t \cN_+ (0) /2} \end{split} \]
is defined so that $\partial_s \xi_s = \cG_s \xi_s$. With the commutation relations (\ref{eq:comm1})-(\ref{eq:comm3}), we find the identity 
\[ \begin{split} e^{(1-s) \sqrt{N} b^* (h_t)} &b(h_t) e^{-(1-s) \sqrt{N} b^* (h_t)} \\ &= b(h_t) - \sqrt{N} \| h_t \|^2 (1-s) \left(1- \frac{\cN_+ (0)}{N} \right) - \| h_t \|^2 (1-s)^2 b^* (h_t) + \frac{(1-s)}{\sqrt{N}} a^* (h_t) a(h_t) \; .\end{split} \] 
Thus 
\[ \begin{split} \cG_s =  - e^{\lambda \kappa_t \cN_+ (0) /2} e^{s \sqrt{N} \phi_+ (h_t)} &\left[ (1-s) \| h_t \|^2 \cN_+ (0) + (1-s) a^* (h_t) a(h_t) - \sqrt{N} \| h_t \|^2 (1-s)^2 b^* (h_t) \right] \\ &\hspace{7cm} \times e^{-s \sqrt{N} \phi_+ (h_t)} e^{-\lambda \kappa_t \cN_+ (0) /2} \; . \end{split} \]
With Prop. \ref{prop:b-con} and Prop. \ref{prop:dG-con} we obtain 
\[ \cG_s = - (1-s) \| h_t \|^2 \cN_+ (0) - (1-s) a^* (h_t) a(h_t) + T \]
where (using the definition (\ref{eq:kappas}) of $\kappa_t$) 
\[ \| T \| \leq C N  \lambda^3 \vertiii{O}^3 e^{C (\| v \|_1 + \| v \|_\infty) t}, \]
for all $\lambda \leq \vertiii{O}^{-1} e^{-C ( \| v \|_\infty + \| v \|_1) t}$ (this guarantees that $\lambda \kappa_t \leq 1$ and $\lambda \| O \| \leq 1$). 
From (\ref{eq:norm-gr}), we obtain 
\[  \partial_s \| \xi_s \|^2 \leq C N \lambda^3 \vertiii{O}^3  e^{C (\| v \|_1 + \| v \|_\infty) t} \| \xi_s \|^2 \]
and thus that 
\[ \| \xi_1 \|^2 \leq e^{CN \lambda^3 \vertiii{O}^3  \exp (C (\| v \|_1 + \| v \|_\infty) t)}   \| \xi_0 \|^2 \]
for all $\lambda \leq \vertiii{O}^{-1} e^{-C ( \| v \|_\infty + \| v \|_1) t}$.

It remains to compute
\[ \begin{split} \| \xi_0 \|^2 &=  e^{N \| h_t \|^2}  \langle e^{\sqrt{N} b^* (h_t)} \Omega , e^{\lambda \kappa_t \cN_+ (0) }e^{\sqrt{N} b^* (h_t)} \Omega \rangle = e^{N \| h_t \|^2} \sum_{n=0}^N \frac{N^n}{(n!)^2} e^{\lambda \kappa_t n} \| b^* (h_t)^n \Omega \|^2 \; .\end{split} \]
Notice that 
\[ \begin{split} \| b^* &(h_t)^n \Omega \|^2 \\ &= \left\| a^* (h_t) (1- \cN_+ (0) /N)^{1/2} a^* (h_t) (1-\cN_+ (0)/N)^{1/2} \dots a^* (h_t) (1-\cN_+ (0)  /N)^{1/2} \Omega \right\|^2 \\ &= \left\| a^* (h_t)^n (1- (\cN_+ (0) + n -1)/N)^{1/2} (1- (\cN_+ (0) +n -2)/N)^{1/2} \dots (1- \cN_+ (0) / N)^{1/2} \Omega \right\|^2 \\ &= \frac{(N - (n-1)) \dots (N-1)}{N^{(n-1)}} \| a^* (h_t)^n \Omega \|^2 =  \frac{(N-1)!}{N^{(n-1)} (N-n)!} n!  \| h_t \|^{2n} \; . \end{split} \]
Therefore, recalling that $h_t = \lambda f_{0;t}/2$
\[ \begin{split} \| \xi_0 \|^2 &= e^{N \| h_t \|^2} \sum_{n=0}^N {N \choose n} \| h_t \|^{2n}  e^{\lambda \kappa_t n}  \\ &= e^{N \| h_t \|^2} \, \left(1 + \| h_t \|^2 e^{\lambda \kappa_t} \right)^N \\ & \leq e^{N \| h_t \|^2 (1 + e^{\lambda \kappa_t})}  \leq e^{N \lambda^2  \| f_{0;t} \|^2 / 2} e^{C N \lambda^3 \vertiii{O}^3 \exp (C (\| v \|_\infty + \| v \|_1) t)} \end{split} \]
for all $\lambda \leq  \vertiii{O}^{-1} e^{-C ( \| v \|_\infty + \| v \|_1) t}$. We conclude that 
\[ 
 \left\langle \Omega, e^{\lambda \sqrt{N} \phi_+ (f_{0;t})/2}  e^{\lambda \kappa_t \cN_+ (0)} e^{\lambda \sqrt{N} \phi_+ (f_{0;t})/2}  \Omega \right\rangle \leq e^{N \lambda^2  \| f_{0;t} \|^2 / 2} e^{C N \lambda^3 \vertiii{O}^3 \exp (C (\| v \|_\infty + \| v \|_1) t)}  \]
 for all $\lambda \leq  \vertiii{O}^{-1} e^{-C ( \| v \|_\infty + \| v \|_1) t}$. 
\end{proof}

\begin{proof}[Proof of Theorem \ref{thm:main}]
Combining Lemma \ref{lm:step1}, Lemma \ref{lm:step2} and Lemma \ref{lm:step3}, we arrive at
\[ \left\langle \Omega, \cW^*_N (t;0) e^{\lambda d\Gamma (q_t \wt{O}_t q_t) + \lambda \sqrt{N} \phi_+ (q_t O \ph_t )} \cW_N (t;0) \Omega \right\rangle \leq e^{N \lambda^2  \| f_{0;t} \|^2 / 2}  \, e^{C N \lambda^3 \vertiii{O}^3 \exp (C (1+ \| v \|_1 + \| v \|_\infty) t)} \; . \]
Therefore 
\[ \begin{split}
\frac{1}{N} \log \mathbb{E}_{\psi_{N,t}} \, e^{\lambda \left[\sum_{j=1}^N (O^{(j)} - \langle \ph_t , O \ph_t \rangle ) \right]}  &= \frac{1}{N} \log \, \left\langle \Omega, \cW^*_N (t;0) e^{\lambda d\Gamma (q_t \wt{O}_t q_t) + \lambda \sqrt{N} \phi_+ (q_t O \ph_t )} \cW_N (t;0) \Omega \right\rangle \\  &\leq \frac{\lambda^2}{2}  \| f_{0;t} \|^2  + C \lambda^3 \vertiii{O}^3 \exp (C (1+ \| v \|_1 + \| v \|_\infty) t) \end{split}  \]
for all $\lambda \leq  \vertiii{O}^{-1} e^{-C ( \| v \|_\infty + \| v \|_1) t}$.

 \end{proof}

\paragraph{Acknowledgements.} 
The authors gratefully acknowledge G\'erard Ben Arous for suggesting this kind of result. K.L.K. was partially supported by NSF CAREER Award DMS-125479 and a Simons Sabbatical Fellowship. S.R. acknowledges funding from the European Union's Horizon 2020 research and innovation programme under the Marie Sk\l{}odowska-Curie Grant Agreement No. 754411. B. S. gratefully acknowledges partial support from the NCCR SwissMAP, from the Swiss National Science Foundation through the Grant ``Dynamical and energetic properties of Bose-Einstein condensates'' and from the European Research Council through the ERC-AdG CLaQS.

\end{document}